\documentclass[preprint, 12pt, 3p]{elsarticle}

\usepackage{amssymb}
\usepackage{amsmath}

\usepackage{graphicx}%
\usepackage{multirow}%
\usepackage{amsfonts}%
\usepackage{amsthm}%
\usepackage{mathrsfs}%
\usepackage[title]{appendix}%
\usepackage{xcolor}%
\usepackage{booktabs}%
\usepackage{algorithm}%
\usepackage{algpseudocode}%
\usepackage{listings}%
\usepackage[utf8]{inputenc} 
\usepackage[T1]{fontenc}    
\usepackage{url}            
\usepackage{nicefrac}       
\usepackage{microtype}      
\usepackage{bm}
\usepackage{makecell}
\usepackage[numbers]{natbib}
\usepackage{subcaption}
\usepackage{float}

\def\mf{\mathbf}
\def\mb{\mathbb}
\def\mc{\mathcal}
\def\beq{\begin{equation*}}
\def\eeq{\end{equation*}}
\def\bql{\begin{equation}}
\def\eql{\end{equation}}
\def\bqn{\begin{eqnarray*}}
\def\eqn{\end{eqnarray*}}
\def\bnl{\begin{eqnarray}}
\def\enl{\end{eqnarray}}
\def\bna{\begin{array}{rcl}}
\def\ena{\end{array}}
\def\bnn{\begin{equation}\begin{array}{rcl}}
\def\enn{\end{array}\end{equation}}
\def\bma{\begin{bmatrix}}
\def\ema{\end{bmatrix}}
\def\bmx{\begin{matrix}}
\def\emx{\end{matrix}}
\def\ben{\begin{enumerate}}
\def\een{\end{enumerate}}
\def\bit{\begin{itemize}}
\def\eit{\end{itemize}}
\def\bei{\begin{itemize}}
\def\eei{\end{itemize}}
\def\bet{\begin{tabular}}
\def\eet{\end{tabular}}

\def\red{\textcolor{red}}
\newcommand{\allcaps}[1]{\uppercase\expandafter{#1}}

\usepackage{comment}
\usepackage{tikz}

\usetikzlibrary{arrows.meta,positioning,fit,calc,shapes.misc,shapes.geometric}

\theoremstyle{thmstyleone}%
\newtheorem{theorem}{Theorem}

\theoremstyle{thmstyletwo}%
\newtheorem{remark}{Remark}%

\theoremstyle{thmstylethree}%
\begin{document}

\begin{frontmatter}

\title{Parametric Interpolation of Dynamic Mode Decomposition for Predicting Nonlinear Systems}

\author[1]{Ananda Chakrabarti}
\corref{cor1}
\ead{chakrabarti.44@osu.edu}

\author[1,2]{Haitham H. Saleh}
\ead{saleh.175@osu.edu}

\author[4]{Indranil Nayak}
\ead{nayaknil@stanford.edu}

\author[1,2]{Balasubramaniam Shanker}
\ead{shanker@ece.osu.edu}

\author[1,2]{Fernando L. Teixeira}
\ead{teixeira.5@osu.edu}

\author[1,3]{Debdipta Goswami}
\ead{goswami.78@osu.edu}

\cortext[cor1]{Corresponding author.}

\affiliation[1]{organization={Department of Electrical and Computer Engineering, The Ohio State University},
  addressline={Columbus, OH 43210},
  country={USA}}

\affiliation[2]{organization={ElectroScience Laboratory, The Ohio State University},
  addressline={Columbus, OH 43212},
  country={USA}}

\affiliation[3]{organization={Department of Mechanical and Aerospace Engineering, The Ohio State University},
  addressline={Columbus, OH 43210},
  country={USA}}

\affiliation[4]{organization={SLAC National Accelerator Laboratory, Stanford University},
  addressline={Menlo Park, CA 94025},
  country={USA}}

\begin{abstract}
We present parameter-interpolated dynamic mode decomposition (\texttt{piDMD}), a parametric reduced-order modeling framework that embeds known parameter-affine structure directly into the DMD regression step. Unlike existing parametric DMD methods which interpolate modes, eigenvalues, or reduced operators and can be fragile with sparse training data or multi-dimensional parameter spaces, \texttt{piDMD} learns a single parameter-affine Koopman surrogate reduced order model (ROM) across multiple training parameter samples and predicts at unseen parameter values without retraining. We validate \texttt{piDMD} on fluid flow past a cylinder, electron beam oscillations in transverse magnetic fields, and virtual cathode oscillations -- the latter two being simulated using an electromagnetic particle-in-cell (EMPIC) method. Across all benchmarks, \texttt{piDMD} achieves accurate long-horizon predictions and improved robustness over state-of-the-art interpolation-based parametric DMD baselines, with less training samples and with multi-dimensional parameter spaces.

\end{abstract}

\begin{keyword}

Koopman \sep Dynamic mode decomposition \sep Data-driven modeling \sep Reduced-order modeling \sep Parametric modeling \sep Machine learning

\end{keyword}

\end{frontmatter}

\section{Introduction} \label{sec:intro}
\noindent
Recent developments in machine learning (ML) have enabled data-driven modeling of complex systems across a wide range of applications, from playing board games \cite{silver2017mastering}, to speech recognition \cite{hinton2012deep}, and even discovering new mathematical algorithms \cite{fawzi2022discovering}. While purely data-driven approaches can be powerful, they often struggle to model physical systems where data might be scarce and the governing equations encode important physical and structural constraints. This has given rise to a family of physics-constrained ML methods \cite{raissi2019physics, greydanus2019hamiltonian, cranmer2020lagrangian,nayak2025temporally} that incorporate knowledge of the underlying dynamics directly into the learning pipeline for accurate system modeling. Despite this progress, modeling and control of complex, high-dimensional nonlinear dynamics remains a central challenge in science and engineering. Many systems of practical interest, such as those involving complex fluid flows and nonlinear plasma dynamics, can be studied through high-fidelity, first-principles simulations. These simulations provide rich spatiotemporal data, but their heavy computational cost makes repeated evaluations across the parameter space impractical. The curse of dimensionality further exacerbates this issue when the parameter space itself is high-dimensional.

Data-driven reduced-order models (ROMs) address these bottlenecks by learning low-rank surrogates directly from data. Such models reproduce the dynamics at a small fraction of the cost of high-fidelity simulations. Among data-driven ROM techniques, dynamic mode decomposition (DMD) \cite{schmid2010, Rowley2009} has become a popular choice, since it is simple to implement and yields a linear representation of the dynamics, which is well-suited for prediction, analysis, and control. DMD also has intrinsic connections to Koopman operator theory \cite{Koopman1931}, which offers a linear perspective on inherently nonlinear systems by operating in an infinite-dimensional space of observable functions. In practice, this operator is approximated using a finite-dimensional projection onto an invariant subspace spanned by a set of observables. DMD obtains such a projection by taking the observables to be the system states themselves. 

Several variants of the standard DMD algorithm have been developed \cite{tu2013dynamic, schmid2022dynamic}. In extended dynamic mode decomposition (EDMD) \cite{Williams2015}, hand-picked dictionaries of observable functions are used to learn the finite-dimensional projection of the Koopman operator. Hankel DMD \cite{kamb2020time, brunton2017chaos} uses time-delay coordinates of the states as vector-valued observables. This approach has connections to Takens' embedding theorem \cite{takens2006detecting} and often yields more accurate predictions than standard DMD. Other Koopman-based modeling frameworks automate the choice of observables entirely using neural networks (NNs), resulting in a family of Koopman autoencoders (KAEs) \cite{lusch2018deep, azencot2020forecasting, nayak2025temporally} for learning nonlinear systems. Online variants of DMD \cite{hemati2014dynamic, zhang2019online, kale2025efficient} learn and recursively update the DMD operator as new data becomes available. Koopman/DMD methods have also been extended to control systems, yielding linear \cite{proctor2016dynamic, otto2019linearly} or bilinear \cite{Goswami2017, Goswami2021, chakrabarti2025temporally} models of nonlinear control systems. These models enable efficient, convex algorithms from linear systems theory to be applied to nonlinear control problems \cite{korda2018, kaiser2021data, mauroy2020koopman}. Starting with fluid dynamics \cite{schmid2011applications, Rowley2009}, DMD and its Koopman-based variants have been successfully applied to a wide variety of domains, such as plasma dynamics \cite{taylor2018dynamic, nayak2024accelerating}, electromagnetic cavities \cite{nayak2023fly}, power networks \cite{susuki2011nonlinear}, epidemiology  \cite{proctor2015discovering}, financial markets \cite{mann2016dynamic}, robotics \cite{abraham2019active, folkestad2022koopnet}, traffic networks \cite{Avila2020}, neuroscience \cite{brunton2016extracting, raut2025arousal} and much more, due to its simplicity and generalizability. In this paper, while we build on the bilinear DMD framework, we restrict our attention to autonomous systems and focus on prediction rather than control.

A key limitation of standard DMD and its many variants is that they are typically trained on data for a single, fixed set of system parameters. In many applications, however, the actual dynamics depends strongly on governing physical parameters, such as the Reynolds number in fluid flows or externally applied fields in plasma systems. If any such parameter changes, the DMD/Koopman operator must be retrained from scratch. Collecting new training data and constructing a separate DMD model for every parameter set of interest quickly becomes computationally prohibitive. This motivates \emph{parametric} DMD methodologies that generalize across a continuum of parameters using data from only a small set of training samples. The central challenge is to introduce parameter dependence without sacrificing the efficiency and interpretability that make DMD attractive in the first place.

Existing parametric DMD methods address this issue in several ways. A popular approach \cite{sayadi2015parametrized} forms a single global regression problem by \emph{stacking} snapshot data from multiple parameters and computing one DMD model, whose modes are then interpolated to unseen parameters. However, such stacking becomes costly as the training set grows. Other methods compute separate reduced-order DMD models at each training parameter and interpolate the \emph{eigenpairs} or \emph{reduced DMD operators} \cite{huhn2023parametric} using standard interpolation techniques to obtain models at new parameters. A more recent method \cite{du2025interpolation} introduces explicit schemes for interpolating \emph{DMD modes} via subspace angles and Grassmann-manifold methods, together with mode realignment, while interpolating eigenvalues separately using standard techniques. Another approach \cite{andreuzzi2023dynamic} projects snapshots into a POD space, forecasts reduced trajectories with DMD (or its variants), and reconstructs unseen parameters by interpolating the reduced trajectories. These interpolation-based approaches can marginally reduce computational cost relative to stacked parametric DMD. In practice, however, they are highly sensitive to the ROM rank and the choice of training parameters, and they rely on the restrictive assumption that the reduced DMD/Koopman operators vary smoothly with parameters. This sensitivity is amplified when the number of training parameters is small or when the parameter space is multi-dimensional, where interpolation can become unstable and may even produce models that fail to converge during prediction. An NN-based approach is also proposed in \cite{guo2025learning}, where the Koopman operator is expressed as a function of the parameters and learnt directly using a neural network. However, this approach still requires a large number of parameter samples as training data to accurately learn a parametric Koopman operator. Its black-box nature also makes exploration and optimization over the parameter space difficult, if not impractical.

This paper presents \emph{parameter-interpolated dynamic mode decomposition} (\texttt{piDMD}), a parameter-affine DMD framework designed to exploit the affine parametric structure of the underlying dynamics. The main idea is to incorporate parameter dependence directly into the regression step used to identify the DMD evolution operator, rather than relying on interpolation of eigenpairs, DMD modes or reduced operators. By learning a parametric Koopman surrogate ROM that is affine in known parameter functions, \texttt{piDMD} preserves the computational workflow of bilinear DMD \cite{Goswami2017, Goswami2021} while avoiding many of the pitfalls of other parametric DMD methods. The resulting model can be evaluated at unseen parameters using operations similar to those in bilinear DMD.

We demonstrate \texttt{piDMD} on representative high-dimensional benchmarks in nonlinear dynamics: a two-dimensional incompressible flow past a cylinder with viscosity variation, an electron beam oscillating under transverse magnetic field, with variation first in the external magnetic field amplitude and subsequently in both the magnetic field amplitude and the electron injection rate, and virtual cathode oscillations with variation in the electron injection rate. The latter two plasma systems were simulated using a charge-conserving electromagnetic particle-in-cell (EMPIC) solver \cite{nayak2021koopman, na2016local}. In all settings, we benchmark against state-of-the-art baselines: \emph{stacked parametric DMD} \cite{sayadi2015parametrized}, \emph{reduced Koopman operator interpolation (rKOI)} \cite{huhn2023parametric}, and \emph{exact (standard) DMD} \cite{schmid2010, Rowley2009}, where the latter is trained directly on data from the target test parameters and therefore represents the best case scenario. The results show that \texttt{piDMD} achieves improved predictive accuracy than the other interpolation-based alternatives and exhibits greater robustness to limited training data and multi-dimensional parameter spaces, all while maintaining a simple and intuitive training and evaluation mechanism.

The remainder of the paper is organized as follows. Section~\ref{sec:background} reviews Koopman operator theory and standard DMD. Section~\ref{sec:pidmd} presents the theory behind the proposed \texttt{piDMD} method as well as the algorithm itself. Section~\ref{sec:results} reports numerical results on the flow past cylinder and the plasma systems of oscillating electron beam and virtual cathode oscillations. Finally, Section~\ref{sec:conclusion} summarizes the paper and discusses future research directions.

\section{Background} \label{sec:background}

\subsection{Koopman operator theory: an overview}\label{subsec:koopman_overview}

Consider a nonlinear dynamical system $\mathbf{f}_{\boldsymbol{\theta}}:\mathbb{X}\rightarrow\mathbb{X}$ evolving on a compact manifold $\mathbb{X} \subseteq \mathbb{R}^{n}$ and parametrized by $\boldsymbol{\theta} \in \mb{R}^p$,
\begin{equation}\label{eq:dyn_sys}
\dot{\mf{x}} = \mf{f}_{\boldsymbol{\theta}}(\mf{x}),
\end{equation}
where $\mf{x} \in \mathbb{X}$ is the state. Let $\mf{\Phi}_{\boldsymbol{\theta}}(t,\mf{x}_0)$ denote the flow map of \eqref{eq:dyn_sys} at time $t>0$, starting from an initial state $\mf{x}_0$. A measurable function $\varphi:\mb{X}\rightarrow\mathbb{C}$ is called an \emph{observable} of the system \eqref{eq:dyn_sys}, and we let $\mathcal{F}$ denote the space of all such observables. The continuous-time Koopman operator $\mathcal{K}_{\boldsymbol{\theta}}^{t}:\mathcal{F}\rightarrow\mathcal{F}$ is then defined as
\begin{equation}\label{eq:koopman_def}
(\mathcal{K}_{\boldsymbol{\theta}}^{t}\varphi)(\cdot)=\varphi\circ\mathbf{\Phi}_{\boldsymbol{\theta}}(t,\cdot),
\end{equation}
where $\circ$ denotes function composition. Unlike the original nonlinear system, which describes the time evolution of the true state $\mf{x}$, the Koopman operator acts on observables $\varphi$ and is \emph{linear} in its argument. As such, $\mathcal{K}_{\boldsymbol{\theta}}^{t}$ can be characterized by its eigenvalues and eigenfunctions. A function $\phi:\mathbb{X}\rightarrow\mathbb{C}$ is an eigenfunction of $\mathcal{K}_{\boldsymbol{\theta}}^{t}$ with corresponding eigenvalue $\lambda\in\mathbb{C}$ if
\begin{equation}\label{eq:koopman_eig}
(\mathcal{K}_{\boldsymbol{\theta}}^{t}\phi)(\cdot)=e^{\lambda t}\phi(\cdot).
\end{equation}

The infinitesimal generator of the Koopman operator, defined as $\lim_{t\to 0}\dfrac{\mathcal{K}_{\boldsymbol{\theta}}^{t}-I}{t}$ (with $I$ being the identity operator), coincides with the Lie derivative $L_{\mathbf{f}_{\boldsymbol{\theta}}}=\mathbf{f}_{\boldsymbol{\theta}}\cdot\nabla$ along $\mathbf{f}_{\boldsymbol{\theta}}$ \cite{Mauroy2016}. This generator satisfies the eigenvalue equation
\begin{equation}\label{eq:generator_eig}
L_{\mathbf{f}_{\boldsymbol{\theta}}}\phi=\lambda\phi.
\end{equation}

Consequently, the time-varying observable $\psi(t,\mathbf{x}) \triangleq \mathcal{K}_{\boldsymbol{\theta}}^{t}\varphi(\mathbf{x})$ is the solution of the partial differential equation (PDE) \cite{Mauroy2016}
\begin{equation}\label{eq:koopman_pde}
\begin{aligned}
\frac{\partial \psi}{\partial t} &= L_{\mathbf{f}_{\boldsymbol{\theta}}}\psi,\\
\psi(0,\mathbf{x}) &= \varphi(\mathbf{x}).
\end{aligned}
\end{equation}

While this linearity of the Koopman operator is a clear advantage, $\mathcal{K}_{\boldsymbol{\theta}}^{t}$ it is also infinite-dimensional and admits an infinite number of eigenfunctions. In particular, if $\phi_{1}$ and $\phi_{2}$ are eigenfunctions of $\mathcal{K}_{\boldsymbol{\theta}}^{t}$ with eigenvalues $\lambda_{1}$ and $\lambda_{2}$ respectively, then $\phi_{1}^{k}\phi_{2}^{l}$ is also an eigenfunction with eigenvalue $k\lambda_{1}+l\lambda_{2}$ for all $k,l\in\mathbb{N}$. Moreover, being infinite-dimensional, the Koopman operator may also contain continuous and residual spectra with associated generalized eigendistributions \cite{Mohr2014}. In this paper, however, we restrict our attention to the point spectrum of the Koopman operator.

Let $\mathbf{g}(\cdot)\in\mathcal{F}^{p}$, $p\in\mathbb{N}$, be a vector-valued observable. It can be expressed in terms of the Koopman eigenfunctions $\phi_{i}(\cdot)$ as
\begin{equation}\label{eq:koopman_modes_expansion}
\mathbf{g}(\cdot)=\sum_{i=1}^{\infty}\phi_{i}(\cdot)\mathbf{v}_{i}^{g},
\end{equation}
where the coefficients $\mathbf{v}_{i}^{g}\in\mathbb{R}^{p}$, $i=1,2,\ldots$, are known as the \emph{Koopman modes} of $\mathbf{g}(\cdot)$. These modes represent the projection of the observable onto the space spanned by the Koopman eigenfunctions \cite{Budisic2012}. While the Koopman eigenvalues and eigenfunctions are intrinsic properties of the dynamics, the Koopman modes depend on the choice of observable.

\subsection{Approximating the Koopman Operator via DMD}\label{subsec:dmd}

Dynamic mode decomposition (DMD) is a data-driven method for approximating the Koopman operator and extracting dominant Koopman modes directly from time-series snapshot data via matrix factorization \cite{Rowley2009, schmid2010}. Unlike other methods such as EDMD and KAEs, DMD uses the identity observable (i.e., the state itself, meaning $\varphi(\cdot)$ is an identity map) and fits a best linear map that advances snapshots forward in time.

Consider state snapshots $\mathbf{x}_k\in\mathbb{R}^n$ sampled uniformly with sampling interval $\Delta t$, i.e.,
\begin{equation}\label{eq:dmd_snapshots}
\mathbf{x}_k \triangleq \mathbf{x}(t_0+k\Delta t),\qquad k=0,1,\dots,T.
\end{equation}

We concatenate them to form snapshot matrices $\mf{X}\in\mathbb{R}^{n\times T}$ and $\mf{X}^+\in\mathbb{R}^{n\times T}$, where $\mf{X}^+$ is one time snapshot ahead of $\mf{X}$:

\begin{equation}\label{eq:dmd_data_mats}
\mf{X}=\begin{bmatrix}\mathbf{x}_0 & \cdots & \mathbf{x}_{T-1}\end{bmatrix},
\qquad
\mf{X}^+=\begin{bmatrix}\mathbf{x}_1 & \cdots & \mathbf{x}_{T}\end{bmatrix}.
\end{equation}

DMD seeks to find a best-fit linear operator $\mathbf{A}\in\mathbb{R}^{n\times n}$ satisfying $\mf{X}^+\approx \mathbf{A}\mf{X}$ using least-squares approximation:
\begin{equation}\label{eq:dmd_ls}
\mf{X}^+\approx \mathbf{A}\mf{X}
\implies
\mathbf{A}
=
\arg\min_{\mc{A}\in\mathbb{R}^{n\times n}}
\left\|\mf{X}^+-\mc{A}\mf{X}\right\|_F^2 \approx \mf{X}^+ \mf{X}^{\dagger}
\end{equation}
where $\dagger$ denotes the Moore-Penrose pseudoinverse. A standard computationally efficient implementation is obtained via rank truncation. Compute the singular value decomposition (SVD) of $\mf{X}_1$ and retain the leading $r$ singular values:
\begin{equation}\label{eq:dmd_svd_trunc}
\mf{X} = \mf{U}\mf{\Sigma}\mf{V}^\top\approx \mathbf{U}_r\mathbf{\Sigma}_r\mathbf{V}_r^{\top},
\end{equation}
where $\mathbf{U}_r\in\mathbb{R}^{n\times r}$, $\mathbf{\Sigma}_r\in\mathbb{R}^{r\times r}$, and $\mathbf{V}_r\in\mathbb{R}^{T\times r}$.
The reduced Koopman operator $\tilde{\mathbf{A}}\in\mathbb{R}^{r\times r}$ is then approximated as
\begin{equation}\label{eq:dmd_Atilde}
\tilde{\mathbf{A}}
\triangleq
\mathbf{U}_r^{\top}\mathbf{A}\mathbf{U}_r
\approx
\mathbf{U}_r^{\top}\mf{X}^+\mathbf{V}_r\mathbf{\Sigma}_r^{-1}.
\end{equation}

We eigendecompose $\tilde{\mathbf{A}}$ as $\tilde{\mathbf{A}} \mf{W}=\mf{W}\mf{\Lambda}$, where $\mf{\Lambda}=\text{diag}(\lambda_1,~\dots,~\lambda_r)$ contains the eigenvalues of $\tilde{\mathbf{A}}$, and $\mf{W}=[\mf{w}_1,~\dots,~\mf{w}_r]$ are the corresponding eigenvectors. The full-state DMD modes $\mf \Phi$ are given by
\begin{equation}\label{eq:dmd_modes}
\mf{\Phi}
\triangleq
\mathbf{U}_r\mathbf{W}
\approx
\mf{X}^+\mathbf{V}_r\mathbf{\Sigma}_r^{-1}\mathbf{W}.
\end{equation}

Since the DMD eigenvalues we obtained are in discrete-time, we convert them into continuous-time eigenvalues by taking the matrix logarithm $\mf{\Omega} \triangleq \frac{1}{\Delta t} \ln (\mf{\Lambda})$, with $\mf{\Omega} = \text{diag}(\omega_1,~\dots,~\omega_r)$. Thus, given an initial condition $\mathbf{x}(t_0)$, the continuous-time DMD reconstruction or prediction is
\begin{equation}\label{eq:dmd_reconstruction_ct}
\hat{\mathbf{x}}(t)\approx
\mf{\Phi}e^{\mf{\Omega}(t-t_0)}\mf{\Phi}^\dagger \mf{x}(t_0).
\end{equation}

This provides an explicit low-dimensional linear model and modal decomposition for analyzing and reconstructing the system dynamics from data.

The standard DMD procedure above constructs an approximation of the finite-dimensional Koopman operator from snapshot data generated by a dynamical system with a \emph{fixed} parameter $\boldsymbol{\theta}$. This requires new training snapshot data at each parameter value of interest in order to construct a corresponding DMD model, which may be computationally expensive or infeasible when high-fidelity simulations or experiments are involved. This motivates \emph{parametric DMD} methods, whose goal is to leverage snapshot data collected at a finite set of training parameters to efficiently construct DMD-based reduced-order models that generalize to previously unseen parameter values.

\begin{algorithm}[!h] \small
\caption{DMD}
\label{alg:dmd}
\textbf{Input:} Snapshot data $\{\mathbf{x}_k\}_{k=0}^{T}$, sampling interval $\Delta t$, SVD truncation rank $r$. \\
\textbf{Output:} Reduced DMD operator $\tilde{\mathbf{A}}$, DMD modes $\mathbf{\Phi}$, and continuous-time eigenvalues $\mathbf{\Omega}$.
\begin{algorithmic}[1]
\State $\mathbf{X} \gets [\mathbf{x}_0~\cdots~\mathbf{x}_{T-1}]$, $\quad\mathbf{X}^+ \gets [\mathbf{x}_1~\cdots~\mathbf{x}_{T}]$ \Comment{DMD regression matrices}
\Procedure{\textsc{Compute \texttt{DMD}}}{$\mathbf{X}^+, \mathbf{X}, r, \Delta t$}
    \State $\mathbf{U}_r,\mathbf{\Sigma}_r,\mathbf{V}_r \gets \text{SVD}(\mathbf{X}, r)$ \Comment{Truncated rank-$r$ SVD of $\mathbf{X}$}
    \State $\tilde{\mathbf{A}} \gets \mathbf{U}_r^\top \mathbf{X}^+ \mathbf{V}_r \mathbf{\Sigma}_r^{-1}$ \Comment{Reduced DMD/Koopman operator}
    \State $\mathbf{W},\mathbf{\Lambda} \gets \text{EIG}(\tilde{\mathbf{A}})$ \Comment{Eigendecomposition of $\tilde{\mathbf{A}}$}
    \State $\mathbf{\Phi} \gets \mathbf{X}^+\mathbf{V}_r\mathbf{\Sigma}_r^{-1}\mathbf{W}$ \Comment{Full-state DMD modes}
    \State $\mathbf{\Omega} \gets \frac{1}{\Delta t}\ln(\mathbf{\Lambda})$ \Comment{Continuous-time eigenvalues}
\EndProcedure
\end{algorithmic}
\end{algorithm}

\section{Parameter-Interpolated Dynamic Mode Decomposition (\texttt{piDMD})}
\label{sec:pidmd}

\subsection{Parametric interpolation for parameter-affine systems}\label{subsec:param_affine_interp}

Consider a dynamical system parametrized by $\boldsymbol{\theta}$ in an affine form
\begin{equation}\label{eq:param_affine_dyn}
\dot{\mathbf{x}}=\mathbf{f}_{\boldsymbol{\theta}}(\mathbf{x})=\mathbf{f}(\mathbf{x})+\sum_{i=1}^{m}\mathbf{g}_i(\mathbf{x})\,h_i(\boldsymbol{\theta}),
\qquad
\mathbf{x}(t_0)=\mathbf{x}_0,
\end{equation}
where $\mathbf{x}\in\mathbb{X}\subseteq\mathbb{R}^{n}$, $\boldsymbol{\theta}\in\mathbb{R}^{p}$, and the parametrization functions $h_i:\mathbb{R}^{p}\rightarrow\mathbb{R}$ for $i=1,\ldots,m$ enter the dynamics affinely through the vector fields $\mathbf{g}_i$. Applying \eqref{eq:koopman_pde} to \eqref{eq:param_affine_dyn} yields the evolution of the following PDE \cite{Goswami2017}:
\begin{equation}\label{eq:param_forced_pde}
\begin{aligned}
\frac{\partial \psi}{\partial t} &= L_{\mathbf{f}}\psi+\sum_{i=1}^{m} h_i(\boldsymbol{\theta})\,L_{\mathbf{g}_i}\psi,\\
\psi(t_0,\mathbf{x}) &= \varphi(\mathbf{x}),
\end{aligned}
\end{equation}
where $L_{\mathbf{g}_i}\triangleq \mathbf{g}_i\cdot\nabla$ are the Lie derivatives along $\mf{g}_i$ for $i=1,\ldots,m$, and are therefore linear operators on the space of $\psi$. The parametrized forced PDE \eqref{eq:param_forced_pde} provides a representation of the parametric system \eqref{eq:param_affine_dyn} in a lifted latent space that is bilinear with respect to the parametric functions $h_i(\boldsymbol{\theta})$.

Let $\mathcal{F}$ be the closure of the span of a countable collection of real-valued observables $\{\varphi_1(\cdot), \ \varphi_2(\cdot), \ \ldots \ \}$. We define the lifted state $\mathbf{z}=\varphi(\mathbf{x})\in\mathcal{Z}\subseteq\mathbb{R}^{N}$, with $\varphi(\mathbf{x})\triangleq\begin{bmatrix}\varphi_1(\mathbf{x}) \quad\varphi_2(\mathbf{x}) \ \ldots \ \end{bmatrix}^{\top}$ and $N$ being its cardinality. Further, assume there exists a nonlinear reconstruction map $\varphi^{-1}:\mathcal{Z}\rightarrow\mathcal{M}$ satisfying $\varphi^{-1}(\mathbf{z})=\mathbf{x}$. The main result is summarized in the following theorem.

\begin{theorem}\label{thm:bilinear_lift}
Assume that $\mathcal{F}$ is invariant under $L_{\mathbf{f}}$ and $\{L_{\mathbf{g}_i}\}_{i=1}^{m}$, i.e., $L_{\mathbf{f}}\mathcal{F}\subset\mathcal{F}$ and $L_{\mathbf{g}_i}\mathcal{F}\subset\mathcal{F}$ for all $i\in\{1,\ldots,m\}$. Then the lifted dynamics admits a bilinear representation:
\begin{equation}\label{eq:bilinear_lift}
\dot{\mathbf{z}}
=
\left.\frac{\partial \varphi(\mathbf{x})}{\partial \mathbf{x}}
\left[
\mathbf{f}(\mathbf{x})+\sum_{i=1}^{m}\mathbf{g}_i(\mathbf{x})\,h_i(\boldsymbol{\theta})
\right]\right|_{\mathbf{x}=\varphi^{-1}(\mathbf{z})}
=
\mathbf{A}\mathbf{z}+\sum_{i=1}^{m} h_i(\boldsymbol{\theta})\,\mathbf{B}_i\mathbf{z},
\end{equation}
where
$\mathbf{A}\mathbf{z}\triangleq \left.L_{\mathbf{f}}\varphi(\mathbf{x})\right|_{\mathbf{x}=\varphi^{-1}(\mathbf{z})}$
and
$\mathbf{B}_i\mathbf{z}\triangleq \left.L_{\mathbf{g}_i}\varphi(\mathbf{x})\right|_{\mathbf{x}=\varphi^{-1}(\mathbf{z})}$.
\end{theorem}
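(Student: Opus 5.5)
The plan is to compute $\dot{\mathbf{z}}$ by the chain rule and then use the invariance hypothesis to express each Lie derivative term as a linear map acting on $\mathbf{z}$.

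\emph{Step 1: chain rule.} Along a trajectory $\mathbf{x}(t)$ of \eqref{eq:param_affine_dyn}, the lifted state is $\mathbf{z}(t)=\varphi(\mathbf{x}(t))$. Differentiating componentwise gives
$\dot{\mathbf{z}}=\frac{\partial\varphi(\mathbf{x})}{\partial\mathbf{x}}\dot{\mathbf{x}}$. Substituting \eqref{eq:param_affine_dyn} and evaluating at $\mathbf{x}=\varphi^{-1}(\mathbf{z})$ yields the first equality in \eqref{eq:bilinear_lift}. Since $h_i(\boldsymbol{\theta})$ are scalars independent of $\mathbf{x}$, linearity of the Jacobian action splits the right-hand side as
\[
L_{\mathbf{f}}\varphi(\mathbf{x})+\sum_{i=1}^m h_i(\boldsymbol{\theta})\,L_{\mathbf{g}_i}\varphi(\mathbf{x}),
\]
where $L_{\mathbf{v}}\varphi_j=\nabla\varphi_j\cdot\mathbf{v}$ is applied row by row. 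This is exactly the generator form in \eqref{eq:param_forced_pde}, applied to each coordinate observable.

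\emph{Step 2: invariance gives linear maps.} Each $\varphi_j$ lies in $\mathcal{F}$. By hypothesis, $L_{\mathbf{f}}\varphi_j\in\mathcal{F}$, which is the closed span of $\{\varphi_k\}$. Hence there are coefficients $a_{jk}$ with $L_{\mathbf{f}}\varphi_j=\sum_k a_{jk}\varphi_k$. Stacking these gives a matrix (or, for $N=\infty$, a linear operator on sequences) $\mathbf{A}=[a_{jk}]$ with $L_{\mathbf{f}}\varphi(\mathbf{x})=\mathbf{A}\varphi(\mathbf{x})$. The same argument with $L_{\mathbf{g}_i}$ produces matrices $\mathbf{B}_i$ such that $L_{\mathbf{g}_i}\varphi(\mathbf{x})=\mathbf{B}_i\varphi(\mathbf{x})$.

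\emph{Step 3: return to $\mathbf{z}$.} Evaluate the identities from Step 2 at $\mathbf{x}=\varphi^{-1}(\mathbf{z})$ and use $\varphi(\varphi^{-1}(\mathbf{z}))=\mathbf{z}$. This gives $L_{\mathbf{f}}\varphi|_{\mathbf{x}=\varphi^{-1}(\mathbf{z})}=\mathbf{A}\mathbf{z}$ and $L_{\mathbf{g}_i}\varphi|_{\mathbf{x}=\varphi^{-1}(\mathbf{z})}=\mathbf{B}_i\mathbf{z}$, which justify the defining relations in the statement. Substituting these into Step 1 gives the bilinear form \eqref{eq:bilinear_lift}.

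\emph{Main obstacle.} The delicate point is Step 2 when $N$ is infinite. There, $\mathcal{F}$ is a closure, so membership only guarantees that $L_{\mathbf{f}}\varphi_j$ is a limit of finite combinations. I would handle this in two ways. First, if $\{\varphi_k\}$ is a Schauder basis (or orthonormal after Gram--Schmidt in a Hilbert setting), the expansion coefficients exist uniquely and the series converges. Second, one must check that $\mathbf{z}\mapsto\mathbf{A}\mathbf{z}$ is a well-defined linear operator on $\mathcal{Z}$, understood in the sense of convergence in $\mathcal{F}$. In the finite-dimensional case, Step 2 is immediate because $\mathcal{F}$ is spanned by finitely many functions, and the theorem reduces to pure bookkeeping.

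A second subtlety is the role of the reconstruction map. Its existence is only needed to phrase the right-hand side as a function of $\mathbf{z}$. The bilinear identity itself holds along trajectories as $\dot{\mathbf{z}}=\mathbf{A}\mathbf{z}+\sum_i h_i\mathbf{B}_i\mathbf{z}$ with $\mathbf{z}=\varphi(\mathbf{x})$, so I would state it that way and note that injectivity of $\varphi$ makes the two formulations equivalent.
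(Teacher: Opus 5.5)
Your proposal is correct and follows the same route as the paper, which just cites Theorem 1 of \cite{Goswami2021} with $u_i$ replaced by $h_i(\boldsymbol{\theta})$. That argument is the one you give: apply the chain rule to $\mathbf{z}=\varphi(\mathbf{x})$, split the result into $L_{\mathbf{f}}\varphi+\sum_i h_i(\boldsymbol{\theta})L_{\mathbf{g}_i}\varphi$, and use invariance of $\mathcal{F}$ to write each Lie-derivative term as a linear map acting on $\mathbf{z}$. Your caveat that the closure case needs a basis-type expansion for the coefficients is a valid refinement that the paper leaves implicit.
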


\begin{proof}
The proof follows that of Theorem 1 in \cite{Goswami2021}, with $u_i$ replaced by $h_i(\boldsymbol{\theta})$.
\end{proof}

We thus obtain a parametric model $\dot{\mf z} = \mf K^{(\boldsymbol{\theta})} \mf z$, where $\mf K^{(\boldsymbol{\theta})} \triangleq \mathbf{A}+\sum_{i=1}^{m} h_i(\boldsymbol{\theta})\,\mathbf{B}_i$ is the parameter-affine Koopman operator.

\begin{remark}\label{rem:truncation}
The map $\varphi^{-1}$ is a minimal representation that uses only the essential components of $\mathbf{z}$ to reconstruct $\mathbf{x}$. As discussed earlier, in standard DMD we have $\mf{z} = \varphi(\mf{x})=\mf{x}$, so that $\varphi=\varphi^{-1}=I$. In practice, when EDMD is used instead, $\mathbf{z}$ is truncated to $\mathbf{z}=\{\varphi_1(\mathbf{x}),\ldots,\varphi_N(\mathbf{x})\}$ with $N\gg n$, ideally ensuring vanishing residuals. Establishing the invariance of this finite-dimensional functional space $\mathrm{span}\{\varphi_1(\cdot),\ldots,\varphi_N(\cdot)\}$ is considerably more challenging and is typically done using neural networks.
\end{remark}

\subsection{Parameter-Interpolated Dynamic Mode Decomposition (\texttt{piDMD})}\label{subsec:pidmd_desc}

Here we present \emph{parameter-interpolated dynamic mode decomposition} (\texttt{piDMD}) to estimate a parameter-dependent finite-dimensional Koopman operator from finite samples of \emph{state} and \emph{parameter} data.
We consider the parameter-affine continuous-time system introduced in Section~\ref{subsec:param_affine_interp}, where the parameter enters through known functions $\{h_i(\boldsymbol{\theta})\}_{i=1}^{m}$.
Since identifying continuous-time generators typically requires derivative information, which is difficult to obtain in practice, we instead learn a discrete-time predictor from snapshot pairs while preserving an affine-in-parameter structure.

Over a sampling interval $\Delta t$, the Koopman bilinear dynamics satisfies
\begin{equation}\label{eq:pidmd_flow}
\mathbf{z}(t+\Delta t)
= e^{\big(\mathbf{A}+\sum_{i=1}^{m} h_i(\boldsymbol{\theta})\,\mathbf{B}_{i}\big)\Delta t}\mathbf{z}(t).
\end{equation}

A first-order Euler discretization yields a parameter-affine discrete-time approximation
\begin{equation}\label{eq:pidmd_euler}
\mathbf{z}(t+\Delta t)
\approx
\Big(\tilde{\mathbf{A}}+\sum_{i=1}^{m} h_i(\boldsymbol{\theta})\,\tilde{\mathbf{B}}_i\Big)\mathbf{z}(t),
\quad
\tilde{\mathbf{A}}\triangleq (\mathbf{I}+\mathbf{A})\Delta t,\quad
\tilde{\mathbf{B}}_i\triangleq \mathbf{B}_{i}\Delta t.
\end{equation}

This provides a framework for estimating $(\tilde{\mathbf{A}},\{\tilde{\mathbf{B}}_i\}_{i=1}^{m})$ directly from data using least-squares approximation.

We assume that we are given $L$ training trajectories corresponding to parameter samples $\{\boldsymbol{\theta}^{(\ell)}\}_{\ell=1}^{L}$, with uniformly sampled snapshots
\begin{equation}\label{eq:pidmd_snapshots}
\mathbf{x}_k^{(\ell)} \triangleq \mathbf{x}^{(\ell)}(t_0+k\Delta t),
\qquad k=0,1,\dots,T,
\qquad \ell=1,\dots,L.
\end{equation}

Let $\mathbf{h}(\boldsymbol{\theta})\triangleq \begin{bmatrix}h_1(\boldsymbol{\theta}) & \cdots & h_m(\boldsymbol{\theta})\end{bmatrix}^{\top}\in\mathbb{R}^{m}$.
To encode the parameter-affine structure linearly in the \texttt{piDMD} least-squares regression, we introduce
\begin{equation}\label{eq:pidmd_kron_lift}
\boldsymbol{\psi}(\mathbf{x},\boldsymbol{\theta})
\triangleq
\begin{bmatrix}
\mathbf{x}\\
\mathbf{h}(\boldsymbol{\theta})\otimes \mathbf{x}
\end{bmatrix}
=
\begin{bmatrix}\mf{x}\\h_1(\boldsymbol{\theta})\mathbf{x}\\ \vdots\\ h_m(\boldsymbol{\theta})\mathbf{x}\end{bmatrix}
\in\mathbb{R}^{n+mn},
\end{equation}.

For each trajectory $\ell$, we define the snapshot matrices
\begin{equation*}
\begin{aligned}
\mathbf{\Psi}^{(\ell)}
&\triangleq
\begin{bmatrix}
\boldsymbol{\psi}(\mathbf{x}_0^{(\ell)},\boldsymbol{\theta}^{(\ell)}) &
\boldsymbol{\psi}(\mathbf{x}_1^{(\ell)},\boldsymbol{\theta}^{(\ell)}) &
\cdots &
\boldsymbol{\psi}(\mathbf{x}_{T-1}^{(\ell)},\boldsymbol{\theta}^{(\ell)})
\end{bmatrix}
\in\mathbb{R}^{(n+mn)\times T},\\
{\mathbf{X}^+}^{(\ell)}
&\triangleq
\begin{bmatrix}
\mathbf{x}_1^{(\ell)} & \cdots & \mathbf{x}_{T}^{(\ell)}
\end{bmatrix}
\in\mathbb{R}^{n\times T},
\end{aligned}
\end{equation*}
and concatenate them to form
\begin{equation*}
\mathbf{\Psi}
\triangleq
\begin{bmatrix}
\mathbf{\Psi}^{(1)} & \cdots & \mathbf{\Psi}^{(L)}
\end{bmatrix}
\in\mathbb{R}^{(n+mn)\times LT}
~\text{and}~
\mathbf{X}^+
\triangleq
\begin{bmatrix}
{\mathbf{X}^+}^{(1)} & \cdots & {\mathbf{X}^+}^{(L)}
\end{bmatrix}\in\mathbb{R}^{n\times LT}.
\end{equation*}

We seek matrices $\tilde{\mathbf{A}}\in\mathbb{R}^{n\times n}$ and $\tilde{\mathbf{B}} \triangleq \begin{bmatrix} \tilde{\mathbf{B}}_1 & \tilde{\mathbf{B}}_2 & \ldots & \tilde{\mathbf{B}}_m \end{bmatrix}\in\mathbb{R}^{n\times mn}$, where $\tilde{\mf{B}}_i \in \mb{R}^{n \times n}, ~i \in \{ 1,\ldots,m\}$, such that $\mathbf{X}^+
\approx
\begin{bmatrix}
\tilde{\mathbf{A}} & \tilde{\mathbf{B}}
\end{bmatrix}
\mathbf{\Psi}$,
i.e.,
\begin{equation}\label{eq:pidmd_ls}
\begin{bmatrix}
\tilde{\mathbf{A}} & \tilde{\mathbf{B}}
\end{bmatrix}
=
\arg \min_{\substack{\mc{A}\in\mathbb{R}^{n\times n}\\ \mc{B}\in\mathbb{R}^{n\times mn}}}
\left\|
\mathbf{X}^+ -
\begin{bmatrix}
\mc{A} & \mc{B}
\end{bmatrix}
\mathbf{\Psi}
\right\|_F^2
\approx
\mf{X}^+ \mf{\Psi}^{\dagger}
\end{equation}

For computational efficiency, we use a rank-$\tilde r$ truncated SVD of $\mathbf{\Psi}$ (by retaining the first $\tilde{r}$ most prominent singular values of $\mf{\Psi}$) to compute \eqref{eq:pidmd_ls}, i.e.,
\begin{equation*}
\mathbf{\Psi}=\tilde{\mathbf{U}} \tilde{\mathbf{\Sigma}} \tilde{\mathbf{V}}^{\top}
\approx
\tilde{\mathbf{U}}_r\tilde{\mathbf{\Sigma}}_r\tilde{\mathbf{V}}_r^{\top}.
\end{equation*}

We partition $\tilde{\mathbf{U}}_r$ into its first $n$ rows and remaining $mn$ rows as
\begin{equation*}
\tilde{\mathbf{U}}_r=
\begin{bmatrix}
\tilde{\mathbf{U}}_0\\
\tilde{\mathbf{U}}_h
\end{bmatrix},
~
\tilde{\mathbf{U}}_0\in\mathbb{R}^{n\times \tilde r},
~
\tilde{\mathbf{U}}_h\in\mathbb{R}^{mn\times \tilde r}.
\end{equation*}

Then the least-squares solution of (\ref{eq:pidmd_ls}) is obtained as
\begin{equation}\label{eq:pidmd_closed_form_A_B}
\tilde{\mathbf{A}}
=
\mathbf{X}^+\,\tilde{\mathbf{V}}_r\tilde{\mathbf{\Sigma}}_r^{-1}\tilde{\mathbf{U}}_0^{\top} \in \mb{R}^{n\times n},
\qquad
\tilde{\mathbf{B}}
=
\mathbf{X}^+\,\tilde{\mathbf{V}}_r \tilde{\mathbf{\Sigma}}_r^{-1}\tilde{\mathbf{U}}_h^{\top}\in \mb{R}^{n\times mn}.
\end{equation}

Substituting $\tilde{\mathbf{B}}
=
\begin{bmatrix}
\tilde{\mathbf{B}}_1 & \tilde{\mathbf{B}}_2 & \cdots & \tilde{\mathbf{B}}_m
\end{bmatrix}, ~\tilde{\mf{B}}_i \in \mb{R}^{n \times n}, ~i=1,\dots,m$, yields the estimate of the parameter-interpolated discrete-time \texttt{piDMD} operator:
\begin{equation}\label{eq:pidmd_K_theta}
\tilde{\mathbf{K}}^{(\boldsymbol{\theta})}
\triangleq
\tilde{\mathbf{A}}+\sum_{i=1}^{m} h_i(\boldsymbol{\theta})\,\tilde{\mathbf{B}}_i \in \mb{R}^{n \times n}.
\end{equation}

Thus, $\tilde{\mathbf{K}}^{(\boldsymbol{\theta})}$ provides a finite-dimensional estimate of the parameterized Koopman operator $\mf K^{(\boldsymbol{\theta})}$ over the sampling interval $\Delta t$ and directly enables interpolation across $\boldsymbol{\theta}$.

We further compute a low-rank representation of $\mathbf{K}^{(\boldsymbol{\theta})}$ for prediction. We perform a truncated SVD of $\mathbf{X^+}$ by retaining the first $\hat{r}$ most prominent singular values, i.e.,
\begin{equation*}
\mathbf{X}^+=\hat{\mathbf{U}}\hat{\mathbf{\Sigma}}\hat{\mathbf{V}}^{\top}
\approx
\hat{\mf U}_{r}\hat{\mf \Sigma}_{r}\hat{\mf V}_{r}^{\top}
\end{equation*}
and obtain the reduced-order parametric operator as follows:
\begin{equation}
\label{eq:piDMD_rom}
\hat{\mf K}_{r}^{(\boldsymbol{\theta})}
\triangleq
\hat{\mf U}_{r}^{\top}\tilde{\mathbf{K}}^{(\boldsymbol{\theta})} \hat{\mf U}_{r}
\in\mathbb{R}^{\hat r\times \hat r}.
\end{equation}

We eigendecompose $\hat{\mf K}_{r}^{(\boldsymbol{\theta})}$ as $\hat{\mf K}_{r}^{(\boldsymbol{\theta})}\mathbf{W}^{(\boldsymbol{\theta})}=\mathbf{W}^{(\boldsymbol{\theta})}\mathbf{\Lambda}^{(\boldsymbol{\theta})}$, where $\mathbf{\Lambda}^{(\boldsymbol{\theta})}=\text{diag}(\lambda_1^{(\boldsymbol{\theta})}, ~\lambda_2^{(\boldsymbol{\theta})},\dots,~\lambda_{\hat r}^{(\boldsymbol{\theta})})$ contains the eigenvalues of $\hat{\mathbf{K}}_{r}^{(\boldsymbol{\theta})}$, and $\mathbf{W}^{(\boldsymbol{\theta})}$ contains the corresponding eigenvectors, with $\mathbf{W}^{(\boldsymbol{\theta})}\triangleq [\mf{w}_1^{(\boldsymbol{\theta})}, ~\dots,~\mf{w}_{\hat r}^{(\boldsymbol{\theta})}]$. Thus, the full-state \texttt{piDMD} modes are
\begin{equation*}
\mf{\Phi}^{(\boldsymbol{\theta})}=\hat{\mf U}_{r}\mathbf{W}^{(\boldsymbol{\theta})}.
\end{equation*}

Since the \texttt{piDMD} eigenvalues we obtained are in discrete time, the continuous-time eigenvalues are obtained by taking the matrix logarithm
$\mf{\Omega}^{(\boldsymbol{\theta})}=\frac{1}{\Delta t} \ln({\mf{\Lambda}}^{(\boldsymbol{\theta})})=\text{diag}(\omega_1^{(\boldsymbol{\theta})},~\dots,~\omega_{\hat r}^{(\boldsymbol{\theta})})$. Hence, given an initial condition $\mathbf{x}(t_0)$, the continuous-time, full-state \texttt{piDMD} reconstruction or prediction of the system for any desired parameter $\boldsymbol{\theta}^*$ at time $t$ is
\begin{equation}\label{eq:pidmd_reconstruction}
\hat{\mathbf{x}}(t)\approx
\boldsymbol{\Phi}^{(\boldsymbol{\theta}^*)}e^{\mathbf{\Omega}^{(\boldsymbol{\theta}^*)}(t-t_0)}{\boldsymbol{\Phi}^{(\boldsymbol{\theta}^*)}}^\dagger \mathbf{x}(t_0).
\end{equation}

\begin{remark}\label{rem:pidmd_known_h}
The \texttt{piDMD} formulation assumes the parametrization functions $\{h_i(\boldsymbol{\theta})\}_{i=1}^{m}$ are known \emph{a priori}.
This is often a natural assumption in physics-driven settings such as automobile HVAC systems, fluid flows or kinetic plasmas where affine parameters enter governing equations through known functional forms.
\end{remark}

\begin{remark}\label{rem:pidmd_error_sources}
The approximation error in \eqref{eq:pidmd_euler}--\eqref{eq:pidmd_reconstruction} arises from discretization/modeling error in approximating the flow map by an affine-in-parameter discrete operator, and truncation error due to the finite-rank projections $\tilde r$ and $\hat r$.
\end{remark}

\begin{algorithm}[!h] \small
\caption{{\texttt{piDMD}}}
\label{alg:pidmd}
\textbf{Input:} Training data $\{(\boldsymbol{\theta}^{(\ell)},\{\mathbf{x}^{(\ell)}_k\}_{k=0}^{T})\}_{\ell=1}^{L}$, sampling interval $\Delta t$, functions $\mathbf{h}(\boldsymbol{\theta})=[h_1(\boldsymbol{\theta}),~\dots,h_m(\boldsymbol{\theta})]^\top$, SVD truncation ranks $\tilde r$ and $\hat r$, target parameters $\{\boldsymbol{\theta}^{*(j)}\}_{j=1}^{J}$. \\
\textbf{Output:} Estimates $\tilde{\mf{A}}$, $\{\tilde{\mathbf{B}}_i\}_{i=1}^{m}$, $\tilde{\mathbf{K}}^{(\boldsymbol{\theta}^{*(j)})}$, $\hat{\mf{K}}_{r}^{(\boldsymbol{\theta}^{*(j)})}$, \texttt{piDMD} modes $\mf{\Phi}^{(\boldsymbol{\theta}^{*(j)})}$, and continuous-time eigenvalues $\mf{\Omega}^{(\boldsymbol{\theta}^{*(j)})}$, for all $j=1,\dots,J$.
\begin{algorithmic}[1]
\For{$\ell=1,\dots,L$}
    \State $\mathbf{\Psi}^{(\ell)} \gets [\boldsymbol{\psi}(\mathbf{x}^{(\ell)}_0,\boldsymbol{\theta}^{(\ell)}),\dots,\boldsymbol{\psi}(\mathbf{x}^{(\ell)}_{T-1},\boldsymbol{\theta}^{(\ell)})]$. \Comment{$\boldsymbol{\psi}(\mathbf{x},\boldsymbol{\theta})
\triangleq
\begin{bmatrix}
\mathbf{x}\\
\mathbf{h}(\boldsymbol{\theta})\otimes \mathbf{x}
\end{bmatrix}$}
    \State ${\mathbf{X}^+}^{(\ell)} \gets [\mathbf{x}^{(\ell)}_1,\dots,\mathbf{x}^{(\ell)}_{T}]$
\EndFor
\State $\mathbf{\Psi}\gets [\mathbf{\Psi}^{(1)}~\cdots~\mathbf{\Psi}^{(L)}]$, ${\mathbf{X}^+} \gets [{\mathbf{X}^+}^{(1)}~\cdots~{\mathbf{X}^+}^{(L)}]$
\Comment{\texttt{piDMD} regression matrices}
\Procedure{\textsc{Compute \texttt{piDMD}}}{$\mathbf{X^+},\mathbf{\Psi},\tilde r,\hat r,\Delta t,\mathbf{h}(\boldsymbol{\theta}),\{\boldsymbol{\theta}^{*(j)}\}_{j=1}^{J}$}
\State $\tilde{\mathbf{U}}_r, \tilde{\mathbf{\Sigma}}_r, \tilde{\mathbf{V}}_r \gets \text{SVD}(\mf{\Psi}, \tilde r)$ \Comment{Truncated rank-$\tilde r$ SVD of $\mf{\Psi}$}
\State $
    \begin{bmatrix} \tilde{\mf U}_{0}\\ \tilde{\mf U}_{h} \end{bmatrix} \gets \tilde{\mf U}_{r}$\Comment{Partitioning $\tilde{\mf U}_{r}$}
\State $\tilde{\mathbf{A}}\gets \mathbf{X}^+\tilde{\mathbf{V}}_r\tilde{\mathbf{\Sigma}}_r^{-1}\tilde{\mathbf{U}}_0^\top$ \Comment{Estimate of $\mf{A}$}
\State $\tilde{\mathbf{B}}\gets \mathbf{X}^+\tilde{\mathbf{V}}_r\tilde{\mathbf{\Sigma}}_r^{-1}\tilde{\mathbf{U}}_h^\top$, $[\tilde{\mathbf{B}}_1,~\cdots,~\tilde{\mathbf{B}}_m] \gets \tilde{\mathbf{B}}$\Comment{Estimates of $\{\mathbf{B}_i\}_{i=1}^{m}$}
\State $\hat{\mf U}_{r}, \hat{\mf \Sigma}_{r},\hat{\mf V}_{r} \gets \text{SVD}(\mf{X}^+, \hat r)$ \Comment{Truncated rank-$\hat r$ SVD of $\mf{X}^+$}
\For{$j=1,\dots,J$}
    \State $\tilde{\mathbf{K}}^{(\boldsymbol{\theta}^{*(j)})}\gets \tilde{\mathbf{A}}+\sum_{i=1}^{m} h_i(\boldsymbol{\theta}^{*(j)})\tilde{\mathbf{B}}_i$ \Comment{Estimate of $\mf{K}^{(\boldsymbol{\theta}^{*(j)})}$}
    \State $\hat{\mf K}_{r}^{(\boldsymbol{\theta}^{*(j)})}\gets \hat{\mf U}_{r}^\top \tilde{\mf K}^{(\boldsymbol{\theta}^{*(j)})}\hat{\mf U}_{r}$ \Comment{Low-rank approximation of $\mf{K}^{(\boldsymbol{\theta}^{*(j)})}$}
    \State $\mathbf{W}^{(\boldsymbol{\theta}^{*(j)})},~\mathbf{\Lambda}^{(\boldsymbol{\theta}^{*(j)})} \gets \text{EIG}(\hat{\mf K}_{r}^{(\boldsymbol{\theta}^{*(j)})})$ \Comment{Eigendecomposition of $\hat{\mf K}_{r}^{(\boldsymbol{\theta}^{*(j)})}$}
    \State $\boldsymbol{\Phi}^{(\boldsymbol{\theta}^{*(j)})}\gets \hat{\mathbf{U}}_{r}\mathbf{W}^{(\boldsymbol{\theta}^{*(j)})}$ \Comment{Full-state \texttt{piDMD} modes}
    \State $\mathbf{\Omega}^{(\boldsymbol{\theta}^{*(j)})}\gets \frac{1}{\Delta t}\ln(\mathbf{\Lambda}^{(\boldsymbol{\theta}^{*(j)})})$ \Comment{Continuous-time eigenvalues}
\EndFor
\EndProcedure
\end{algorithmic}
\end{algorithm}

\section{Results}\label{sec:results}

In this section, we evaluate the proposed \texttt{piDMD} method on three complex, high-dimensional nonlinear systems: incompressible fluid flow past a cylinder, an oscillating electron beam in a transverse magnetic field, and virtual cathode oscillations. In each case, we benchmark \texttt{piDMD} against \emph{exact (standard) DMD} \cite{schmid2010, Rowley2009}, which is trained directly on data from the test parameters and therefore serves as the best-case baseline, along with two state-of-the-art interpolation-based parametric DMD methods: \emph{stacked parametric DMD} \cite{sayadi2015parametrized} and \emph{reduced Koopman operator interpolation (rKOI)} \cite{huhn2023parametric}. As our error metric, we use the $2$-norm residual (relative) error $\delta^{(j)}(t)=\frac{\lVert\mf{x}^{(j)} (t)-\hat{\mf{x}}^{(j)} (t)\rVert_2}{\lVert\mf{x}^{(l)}(t)\rVert_2}$, where $\mf{x}^{(j)}(t)$ is the true state and $\hat{\mf{x}}^{(j)}(t)$ is the predicted state for a test parameter sample $\boldsymbol{\theta}^{*(j)}$ at time $t$. For each test parameter sample, we calculate the time-averaged residual error over the prediction horizon. In all three systems, \texttt{piDMD} outperforms both interpolation-based baselines.

\subsection{Flow past cylinder}\label{subsec:results_cylinder}

We consider the two-dimensional incompressible flow past a stationary circular cylinder, a canonical benchmark in nonlinear dynamics that exhibits rich nonlinear behavior. The flow is governed by the incompressible Navier--Stokes (NS) equations
\begin{equation*}\label{eq:NS}
\frac{\partial \mathbf{u}}{\partial t}+(\mathbf{u}\cdot\nabla)\mathbf{u}
=
-\nabla p+\nu \nabla^{2}\mathbf{u},
\quad
\nabla\cdot \mathbf{u}=0,
\end{equation*}
where $\mathbf{u}(\mf{x},t)$ is the horizontal velocity field, $p(\mf{x},t)$ is the pressure, and $\nu$ is the viscosity, with $\nu =\frac{1}{\text{Re}}$, $\text{Re}$ being the Reynolds number.

We generate training and testing data by numerically solving the NS equations using the immersed boundary projection method (IBPM) \cite{taira2007}, which employs a vorticity-based formulation on a staggered Cartesian grid. The computational domain is $\Omega=[-1,8]\times[-2,2]$, discretized with a uniform $300\times 133$ grid. The cylinder has diameter $D=1$ and is centered at the origin. The simulation is advanced with a fixed time step $\Delta t=0.02$ and is carried out in three phases: (i) a spin-up phase to establish the wake, (ii) a brief perturbation phase in which the cylinder undergoes a small plunging motion to excite the dynamics, and (iii) a data-collection phase in which velocity snapshots are recorded at each time step. We enforce zero Dirichilet conditions on the domain boundaries and no-slip condition on the cylinder surface.

For parametric modeling, we consider the viscosity as the parameter of interest, i.e., $\boldsymbol{\theta} =h_1(\boldsymbol{\theta})=\nu$ and vary it from 0.01 to 0.02 in increments of $0.001$, which corresponds to Reynolds numbers $\mathrm{Re}\in[50,100]$. To reduce computational cost, we spatially downsample the simulated velocity field by a factor of $3$ in each spatial direction and retain only the streamwise horizontal velocity component $u$. This yields a state dimension of $n=4400$. For each $\nu$, we discard the first $1500$ snapshots to remove transient portion of the dynamics.

We train the \texttt{piDMD} model using $T=250$ snapshots from three parameter values $\nu= \{0.01,\,0.015,\,0.02\}$, and evaluate the trained model on the remaining values (i.e., we have 3 training and 8 test parameter samples). The $\nu$ values are normalized to be between 0 and 0.01. We use the truncation ranks $\tilde{r}=\hat{r}=40$. For each test $\nu$, we predict over a horizon of $1000$ time steps (after transient removal) to evaluate the model's performance. We benchmark this method against exact DMD, stacked parametric DMD and rKOI, each with DMD truncation rank 40 and identical training and prediction windows. The time-averaged residual error results are shown in Figure \ref{fig:fpc_boxplot_pidmd_vs_stacked}, and ground truth vs prediction comparisons at the final prediction time step for \texttt{piDMD} at various test values of $\nu$ are shown in Figure \ref{fig:fpc_pidmd_ground_snaps}. 

\begin{figure}[t]
    \centering
    \includegraphics[width=0.45\textwidth]{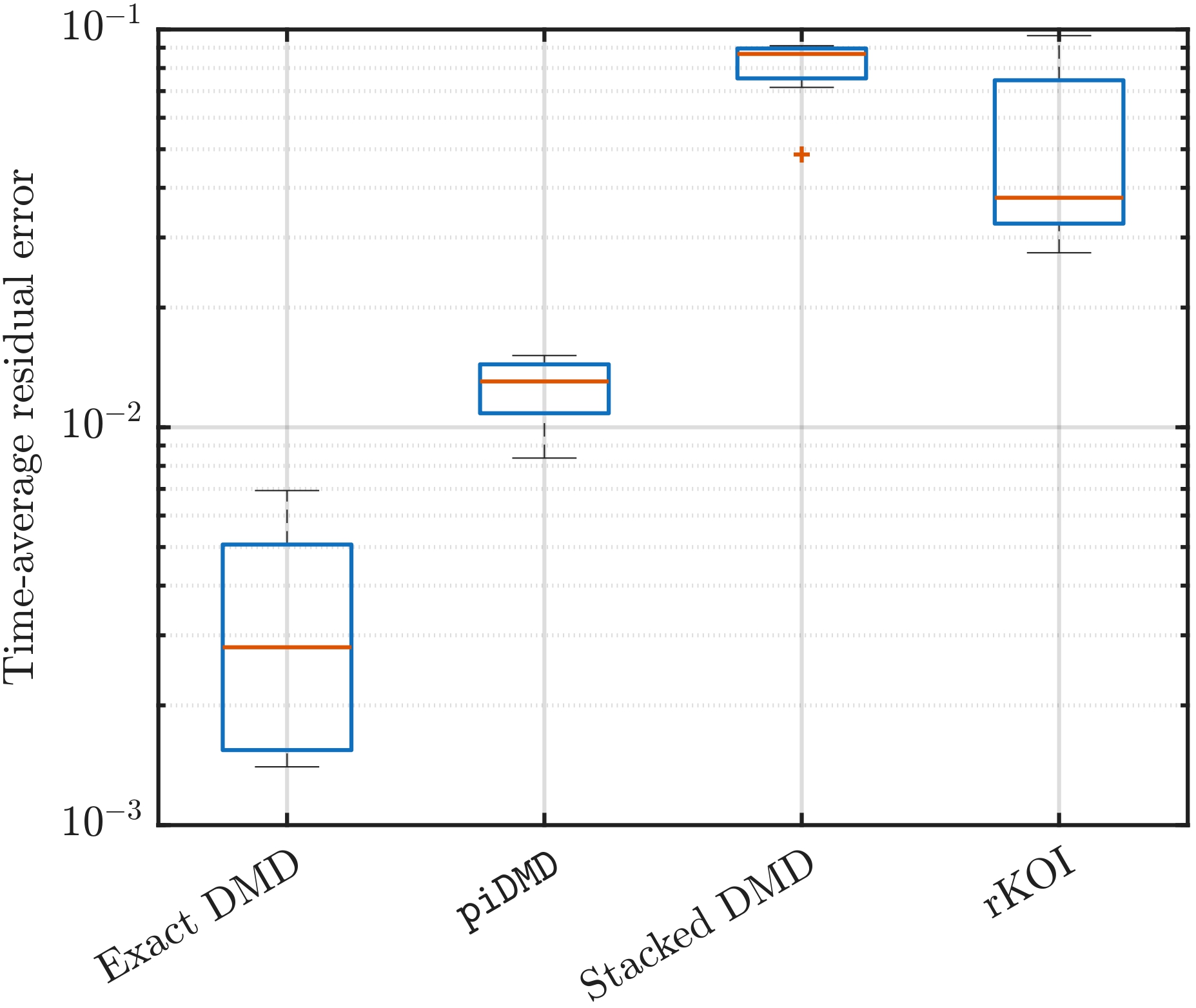}
    \caption{Time-averaged residual error comparisons for exact DMD, \texttt{piDMD}, stacked parametric DMD and rKOI prediction for the flow past cylinder system over the test $\nu$ values.} 
    \label{fig:fpc_boxplot_pidmd_vs_stacked}
\end{figure}

\begin{figure}[!t]
\centering
\subfloat[$\boldsymbol{\theta}^*=\nu=0.013$ (Re $\approx 75$)]{%
  \includegraphics[width=0.31\textwidth]{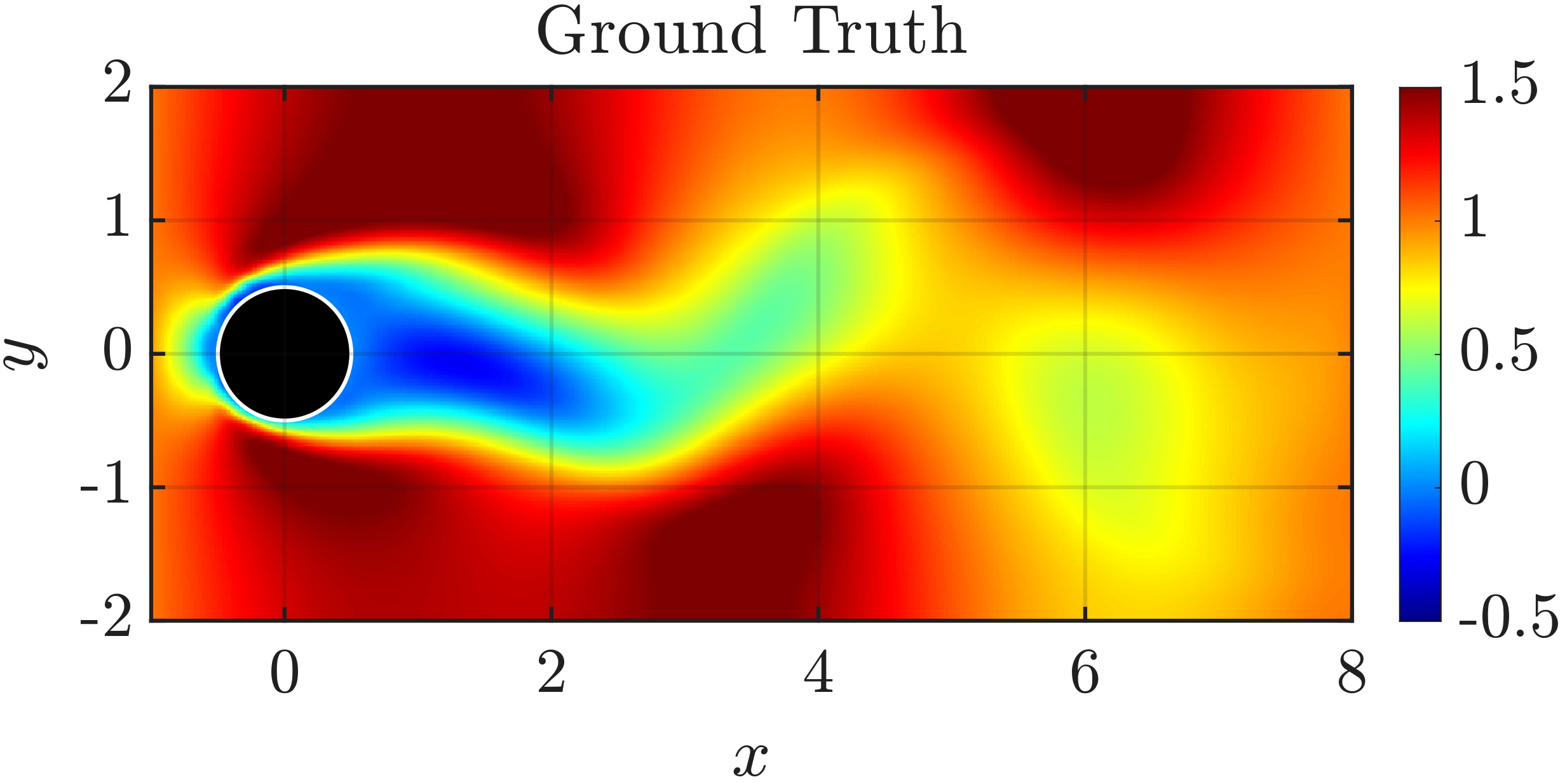}\hspace{0.02\textwidth}%
  \includegraphics[width=0.31\textwidth]{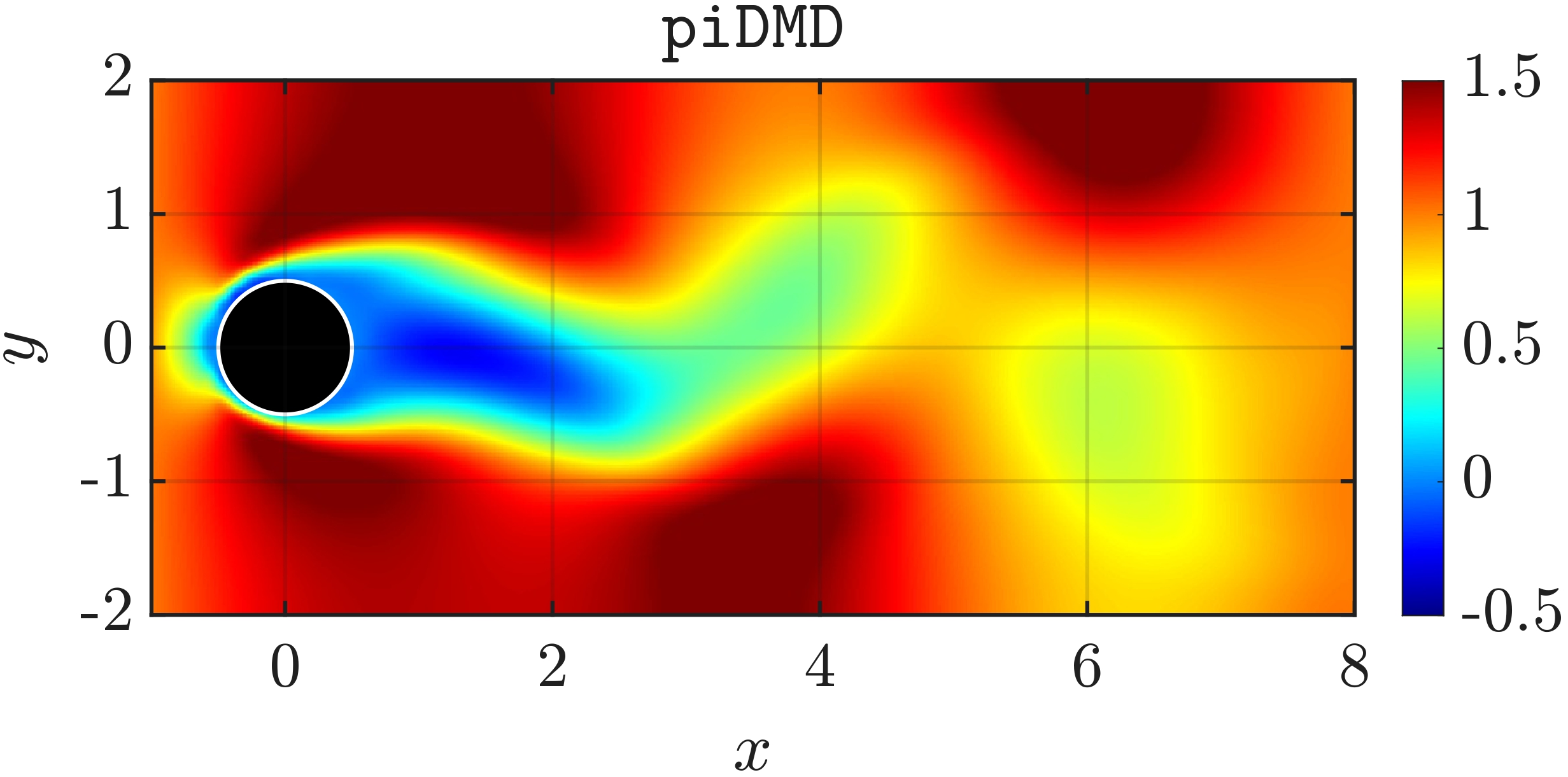}\hspace{0.02\textwidth}%
  \includegraphics[width=0.31\textwidth]{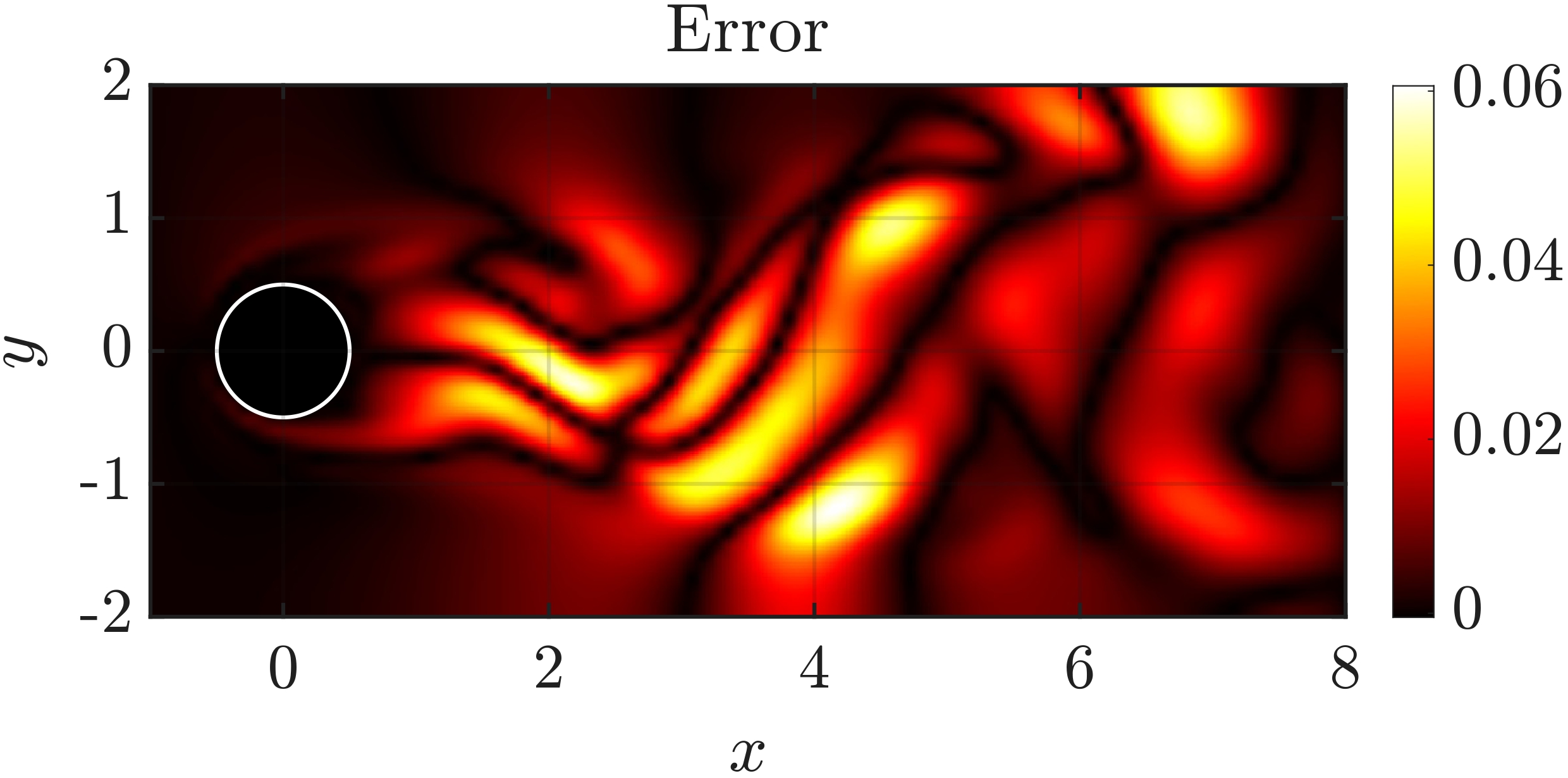}%
}

\vspace{0.8em}
\subfloat[$\boldsymbol{\theta}^*=\nu=0.017$ (Re $\approx 60$)]{%
  \includegraphics[width=0.31\textwidth]{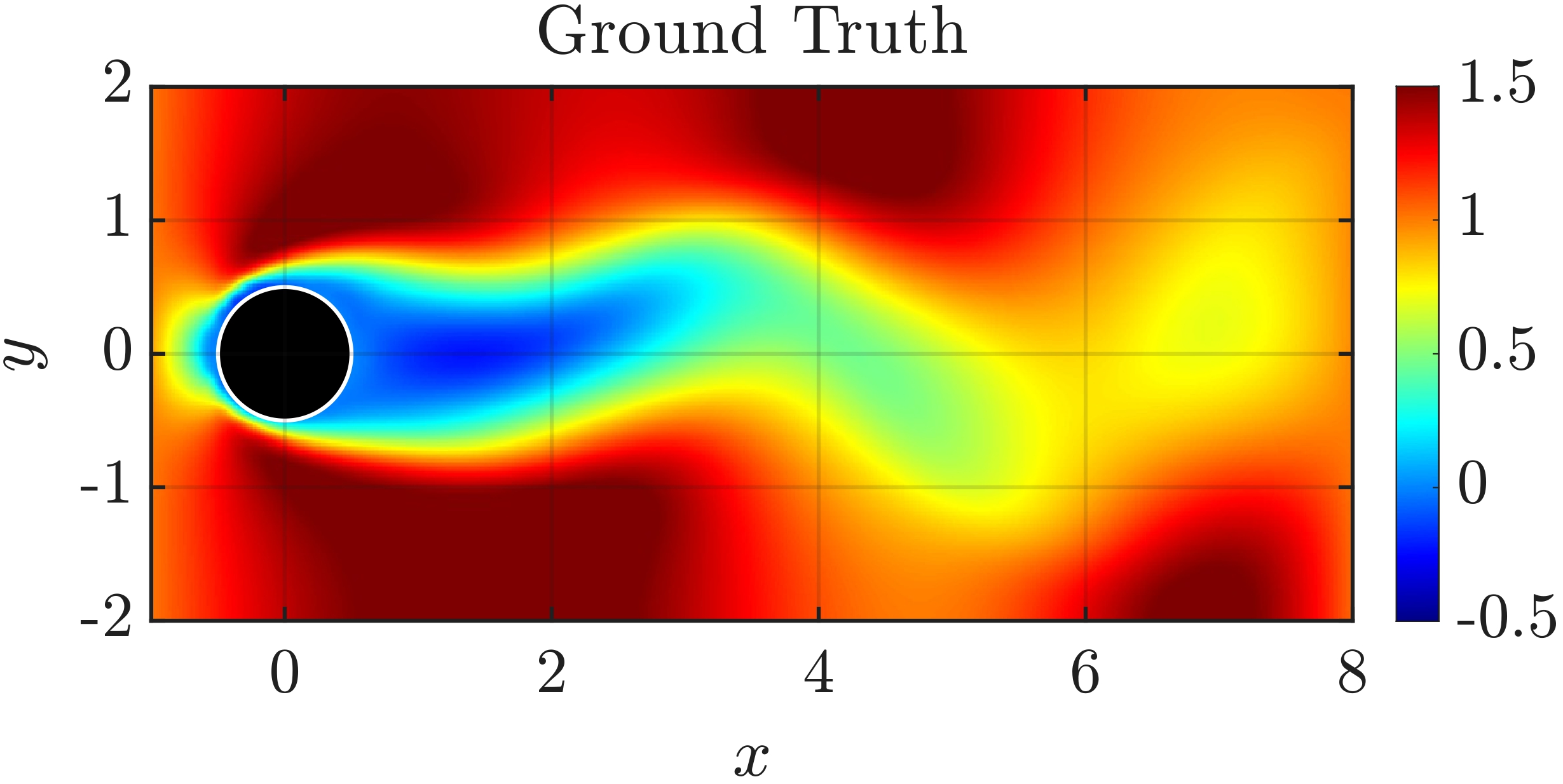}\hspace{0.02\textwidth}%
  \includegraphics[width=0.31\textwidth]{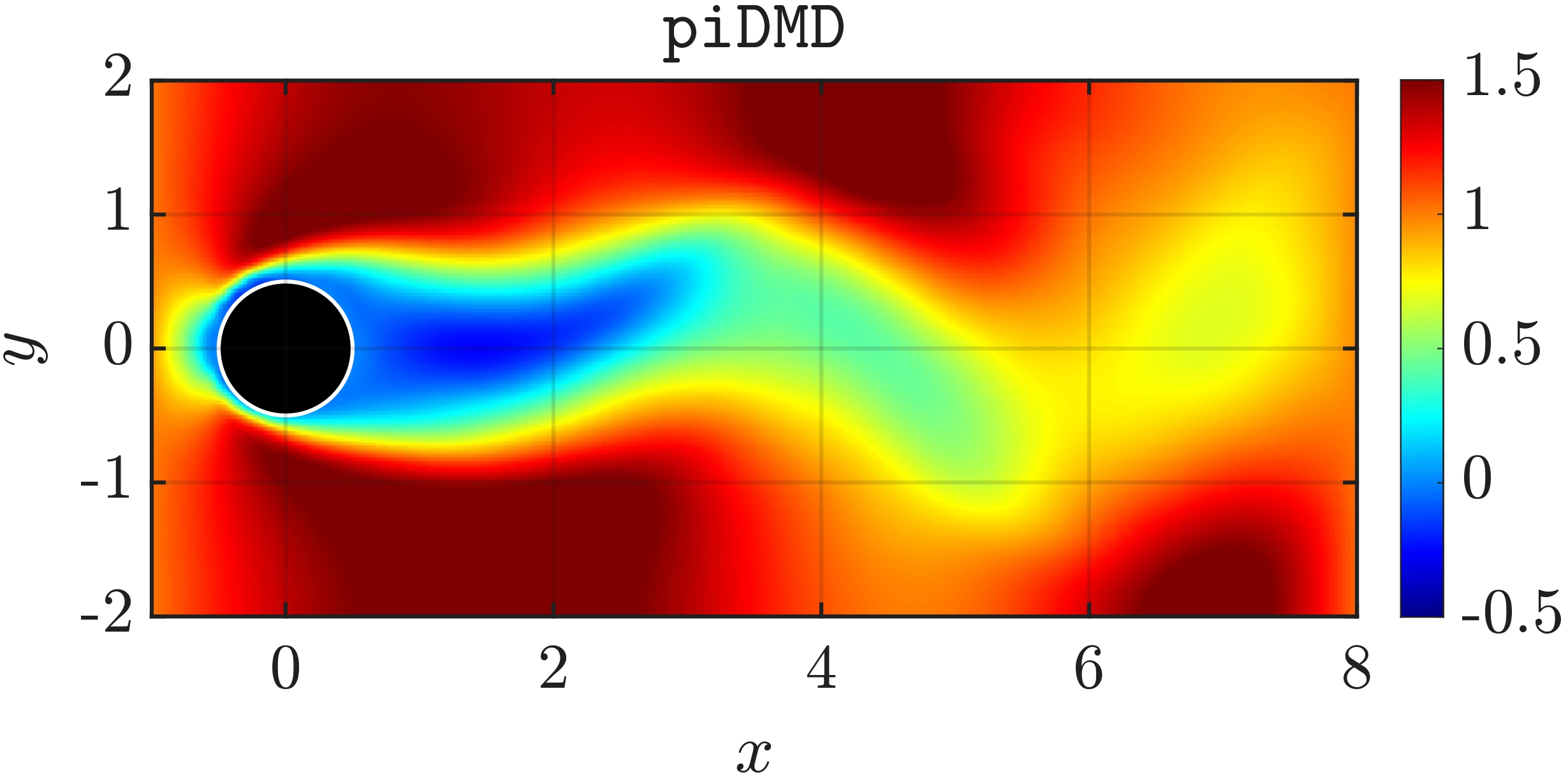}\hspace{0.02\textwidth}%
  \includegraphics[width=0.31\textwidth]{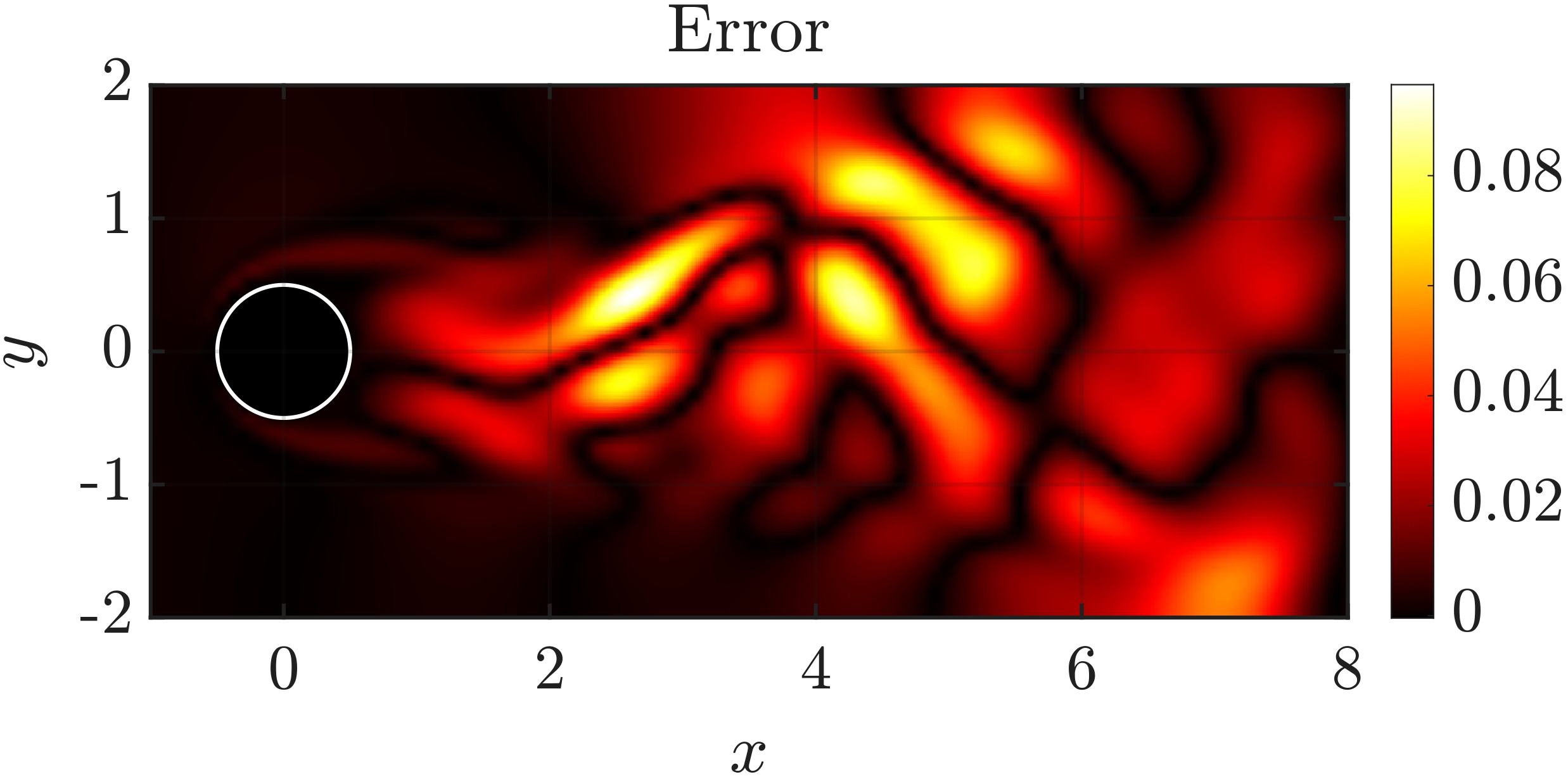}%
}

\caption{Horizontal velocity field snapshots for the flow past cylinder system at the final prediction time step for two different test values of $\nu$. Each row shows the ground truth, \texttt{piDMD} reconstruction, and pointwise absolute error.}
\label{fig:fpc_pidmd_ground_snaps}
\end{figure}

\subsection{Oscillating electron beam}\label{subsec:results_electron_beam}

\begin{figure}[t]
    \centering
    \includegraphics[width=0.5\textwidth]{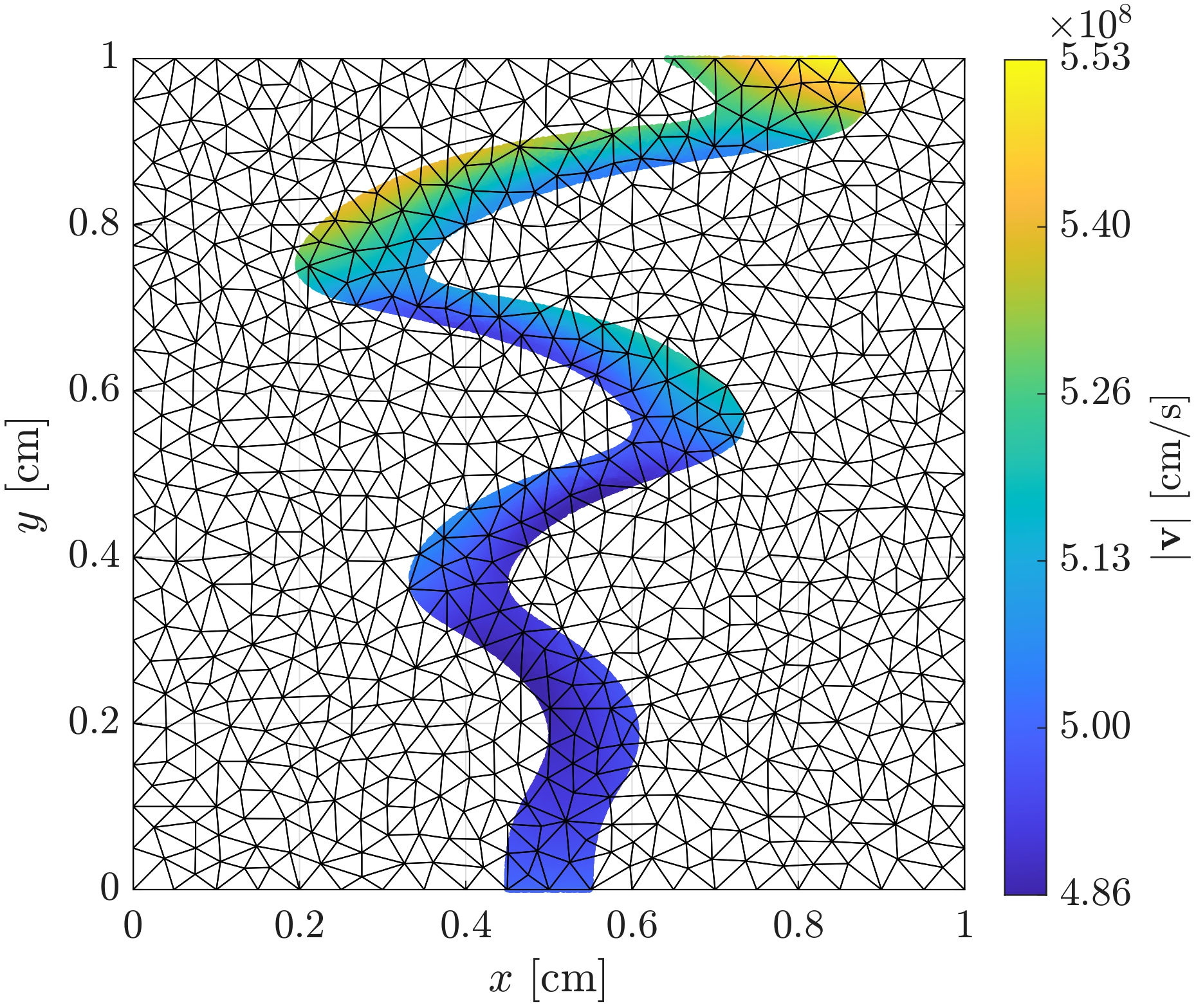}
    \caption{Electron beam snapshot at $20.8 ~\text{ns}$ with $B_0=25~\text{mT}$ and $r_{\mathrm{sp}}=2 \times 10^4$, with superparticle injection rate of 12 per time step. Each superparticle is colored based on its instantaneous velocity $| \mf{v} |$.}
    \label{fig:wavy_particle_distrib}
\end{figure}

\begin{figure}[!htbp]
    \centering
    \includegraphics[width=0.45\textwidth]{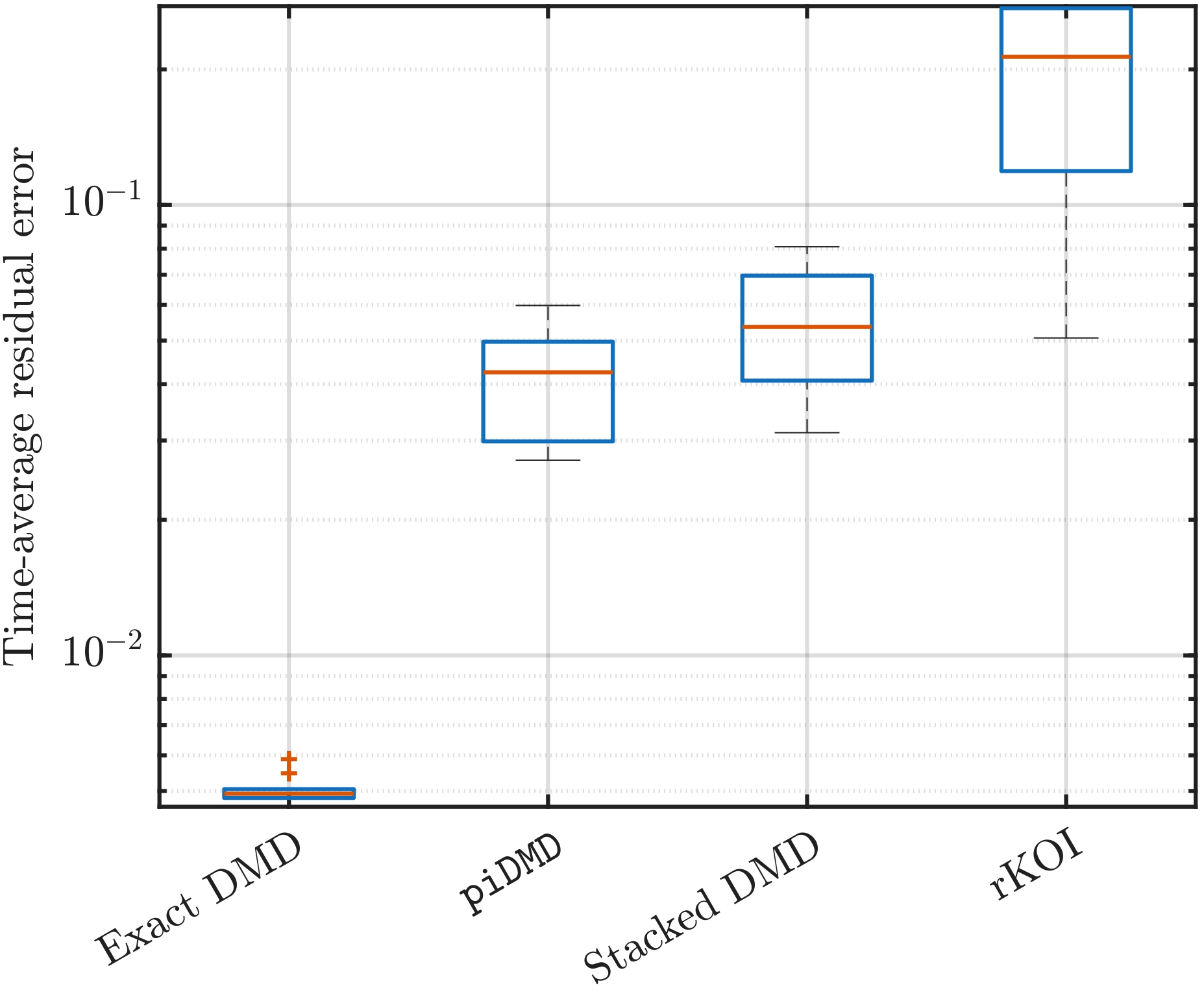}
    \caption{Time-averaged residual errors for exact DMD, \texttt{piDMD}, stacked parametric DMD and rKOI prediction for oscillating electron beam over the test $B_0$ values.}
    \label{fig:wavy_boxplot_pidmd_vs_stacked}
\end{figure}

\begin{figure*}[!t]
\centering
\subfloat[$\boldsymbol{\theta}^*=B_0=15$ mT]{%
  \includegraphics[width=0.31\textwidth]{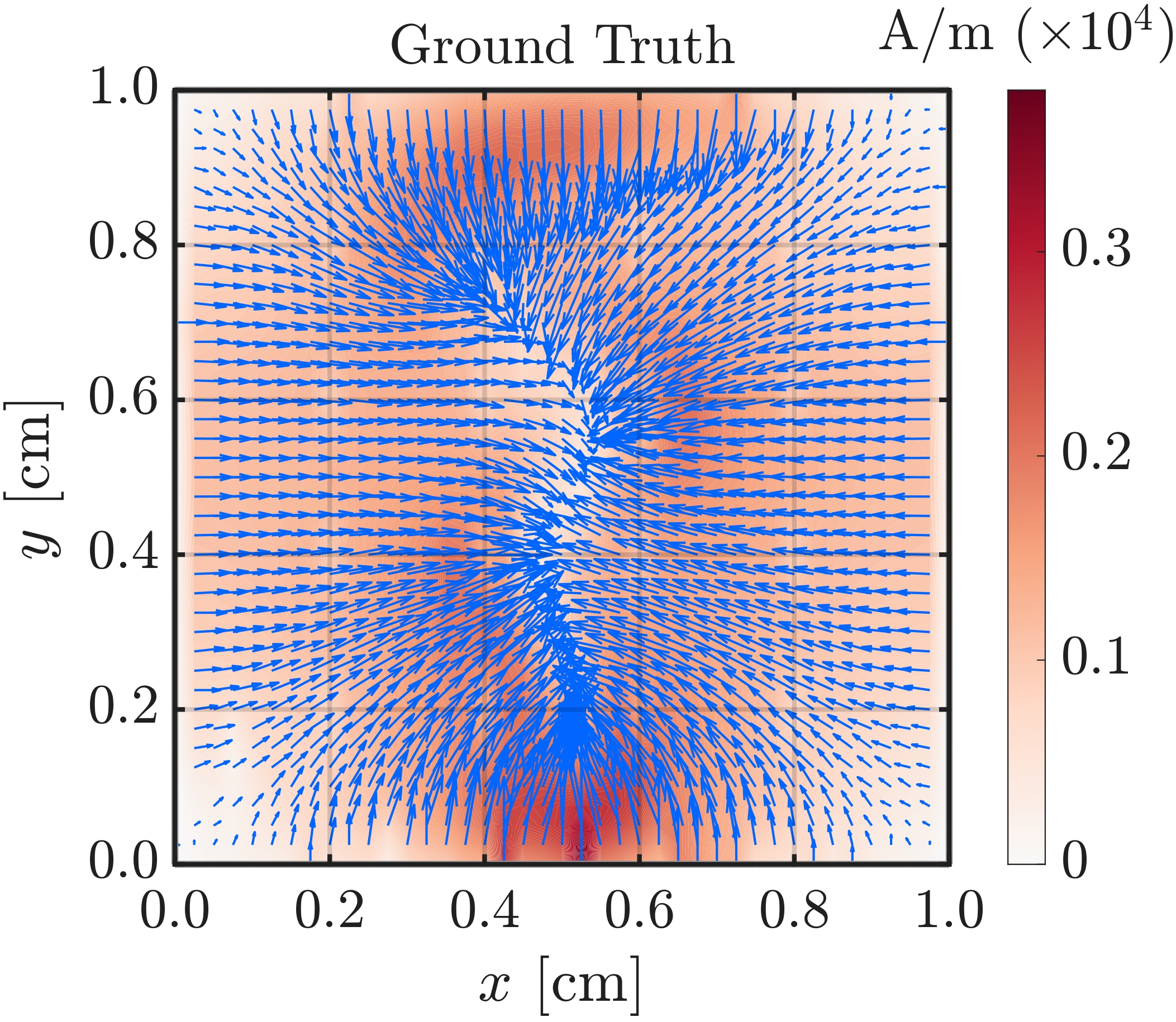}\hspace{0.02\textwidth}%
  \includegraphics[width=0.31\textwidth]{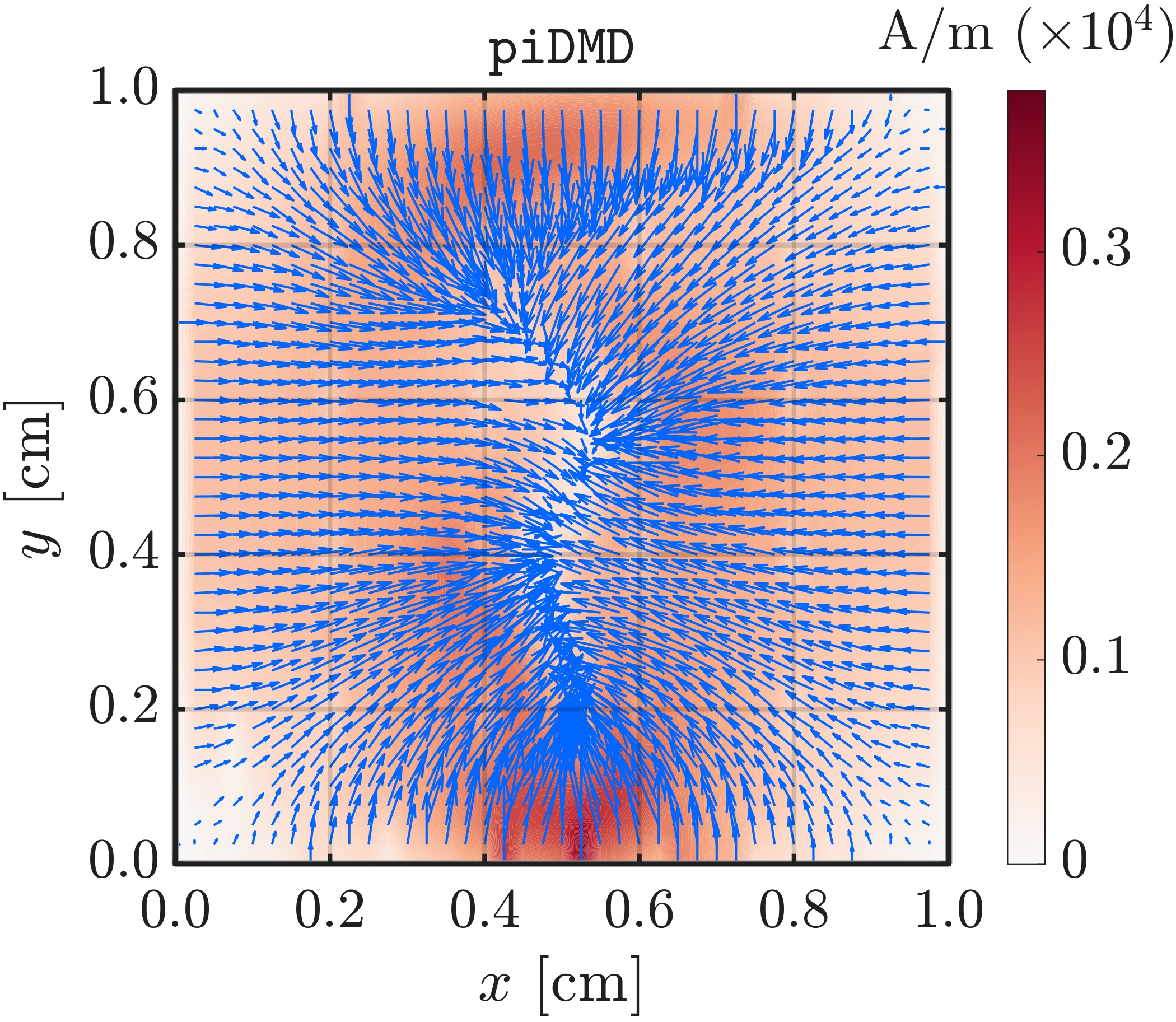}\hspace{0.02\textwidth}%
  \includegraphics[width=0.31\textwidth]{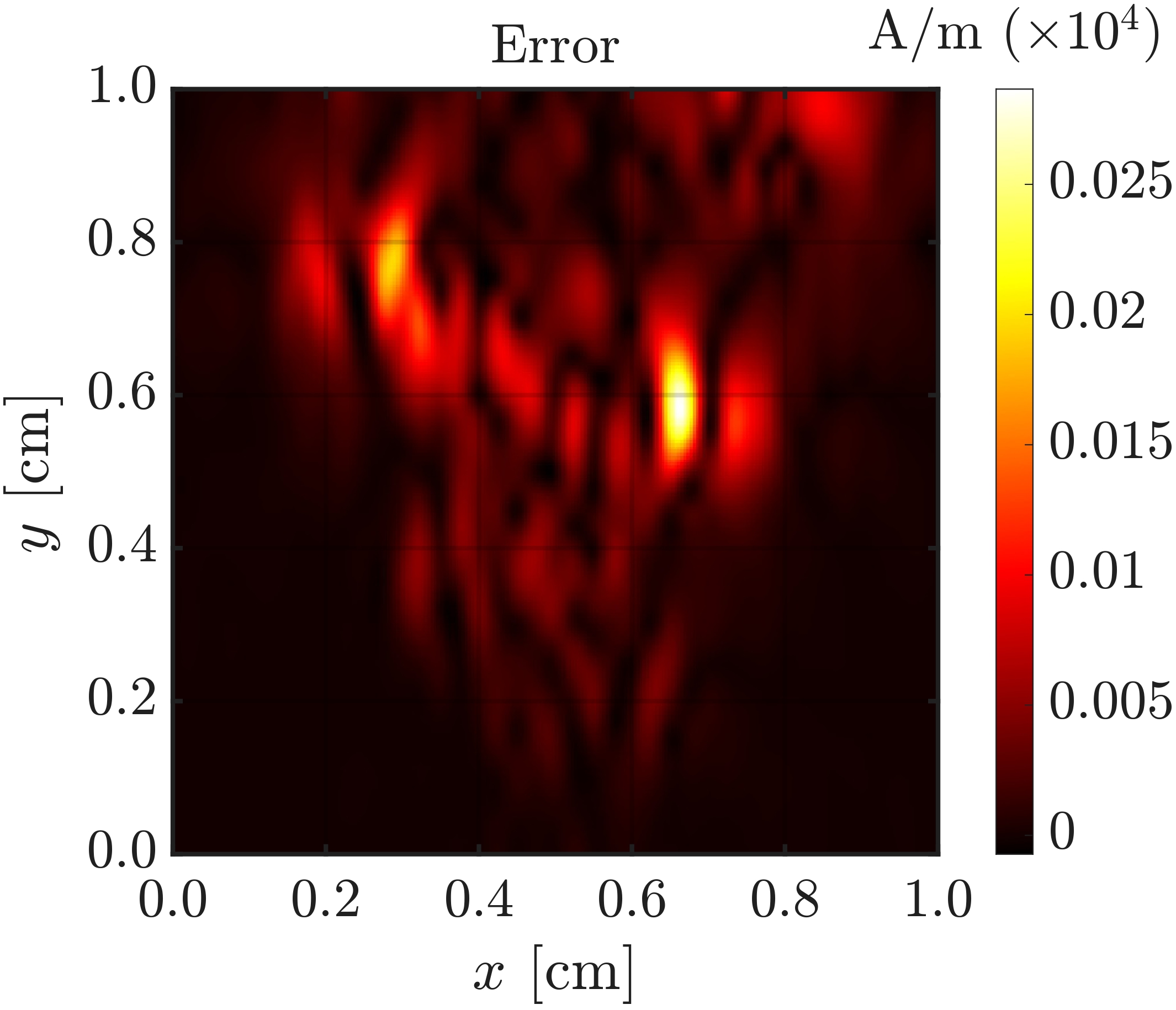}%
}

\vspace{0.8em}
\subfloat[$\boldsymbol{\theta}^*=B_0=27.5$ mT]{%
  \includegraphics[width=0.31\textwidth]{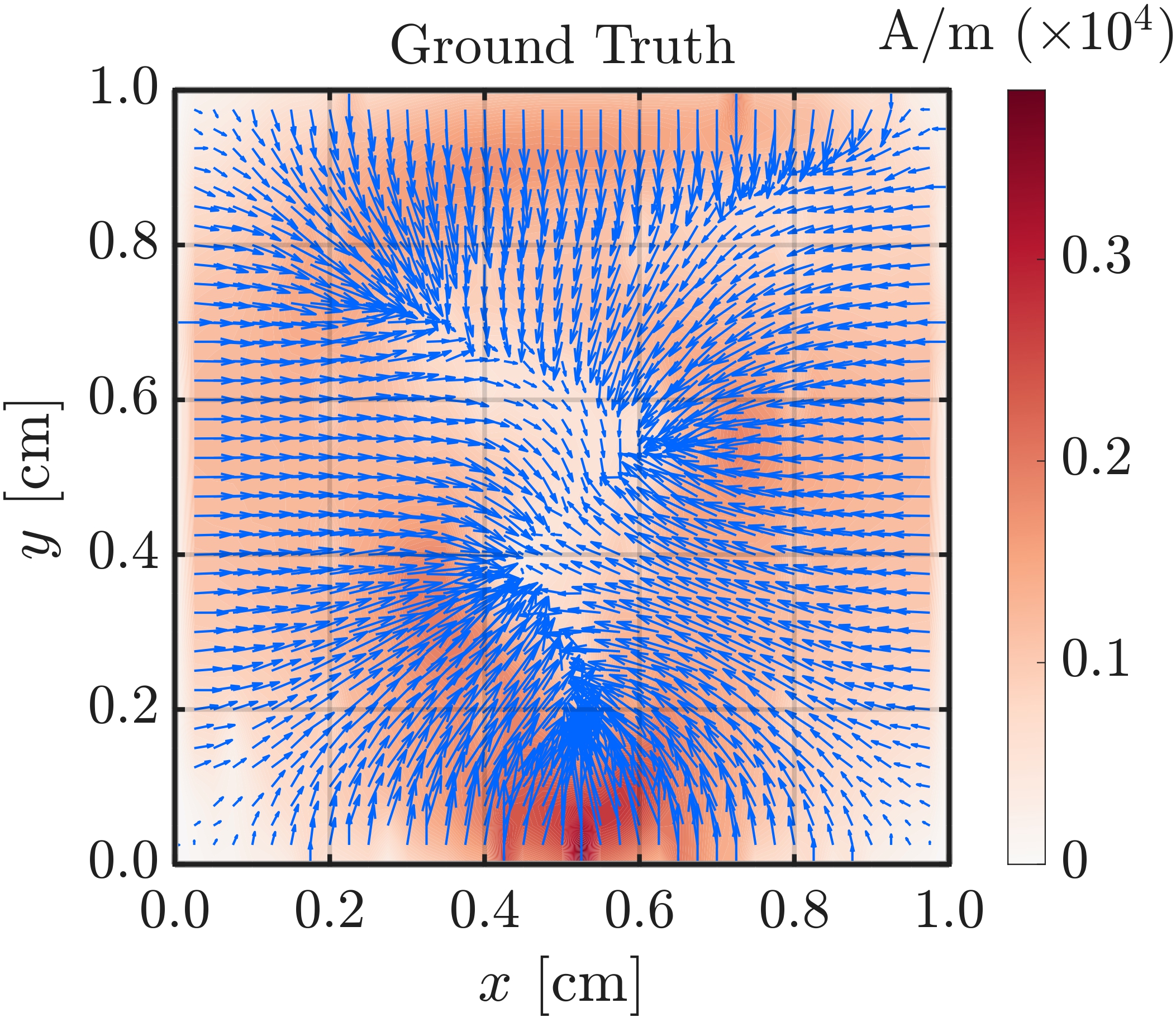}\hspace{0.02\textwidth}%
  \includegraphics[width=0.31\textwidth]{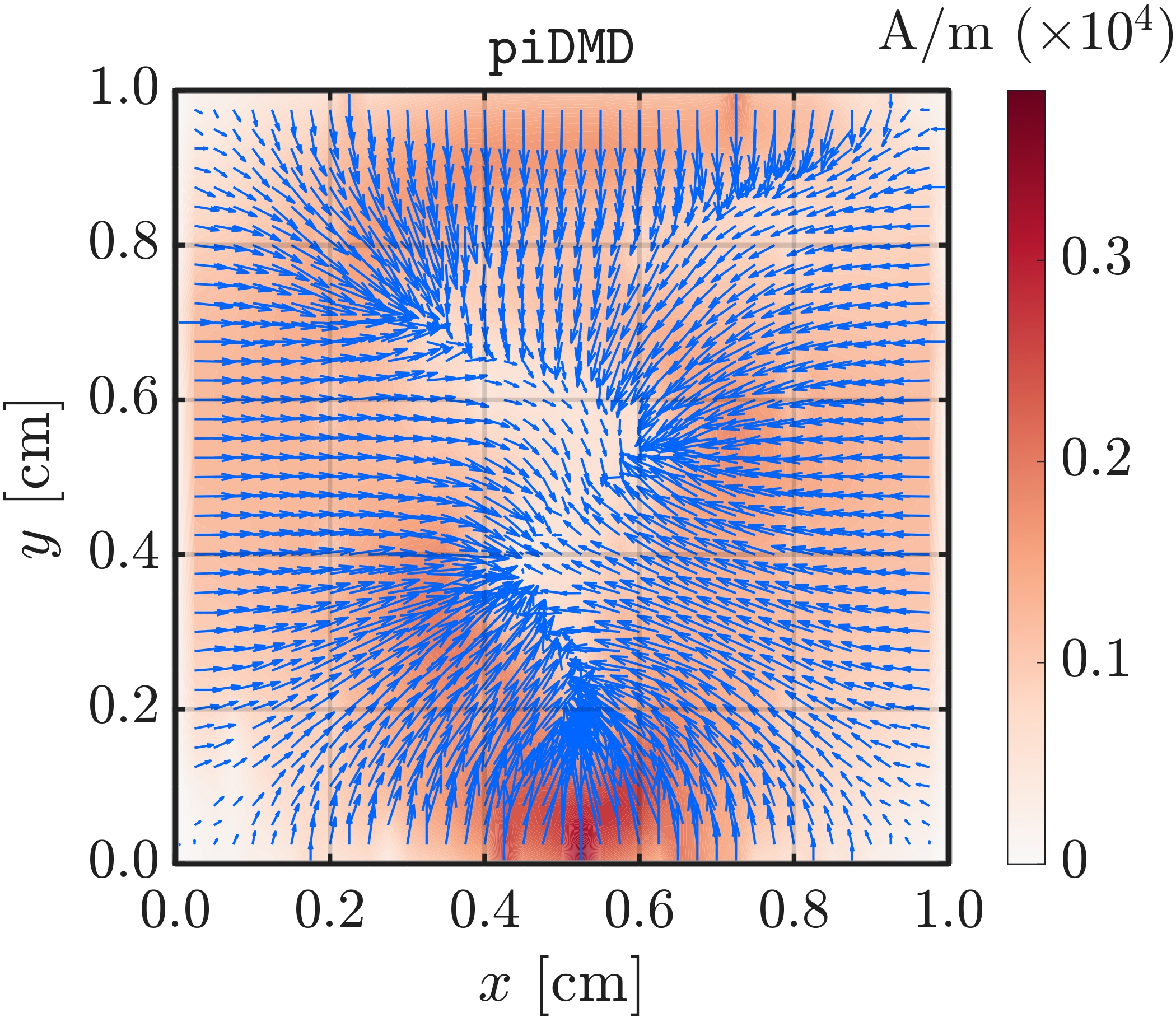}\hspace{0.02\textwidth}%
  \includegraphics[width=0.31\textwidth]{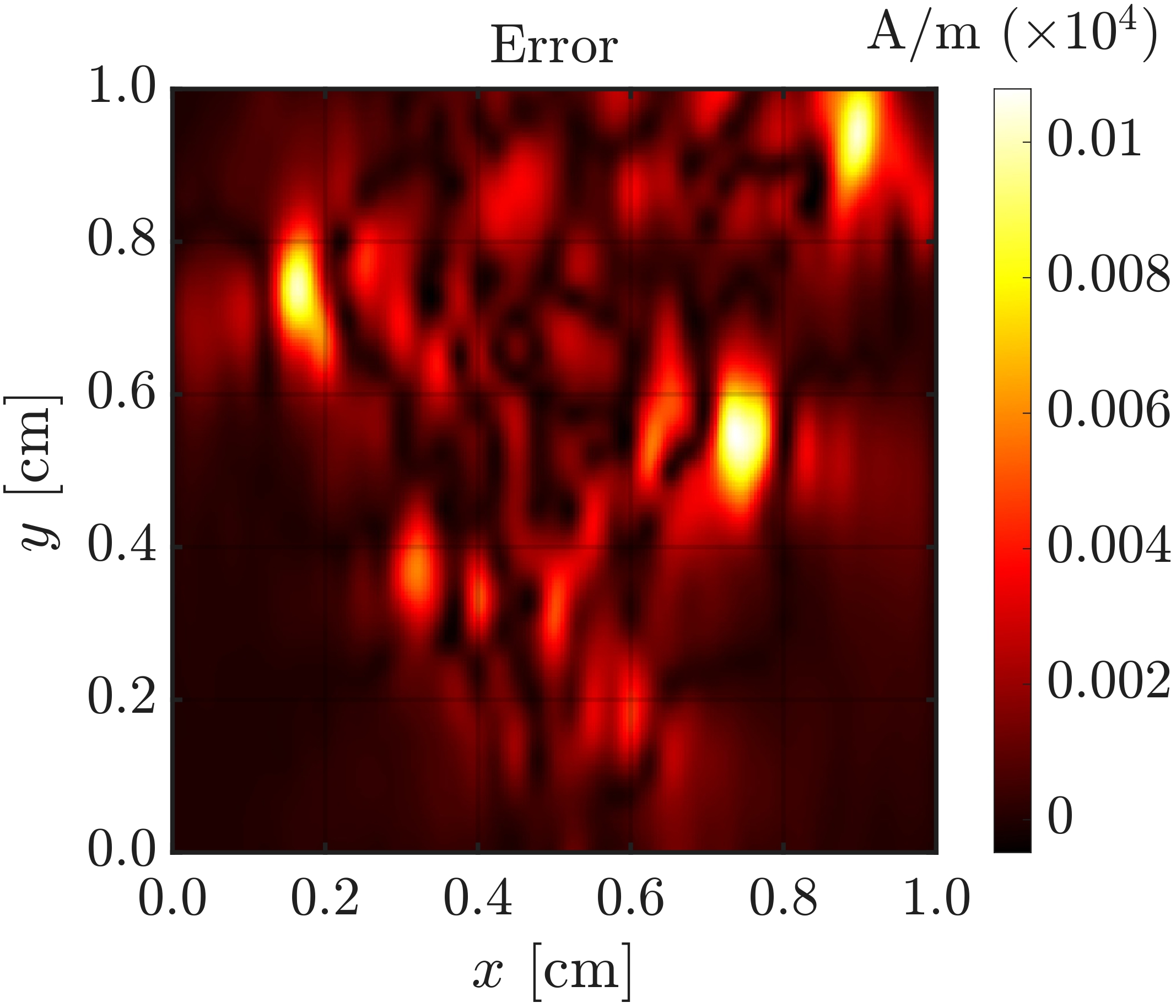}%
}

\vspace{0.8em}
\subfloat[$\boldsymbol{\theta}^*=B_0=32.5$ mT]{%
  \includegraphics[width=0.31\textwidth]{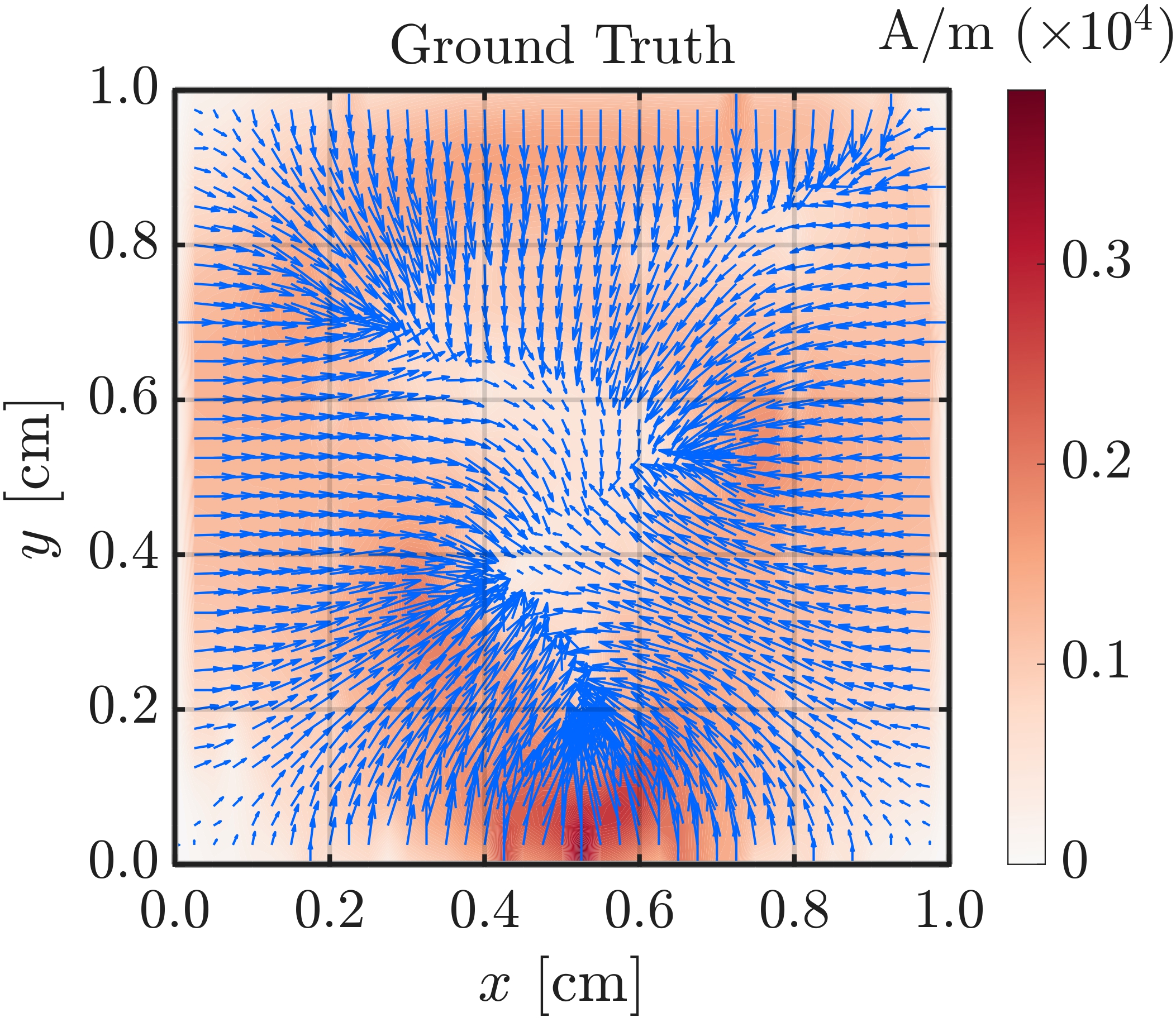}\hspace{0.02\textwidth}%
  \includegraphics[width=0.31\textwidth]{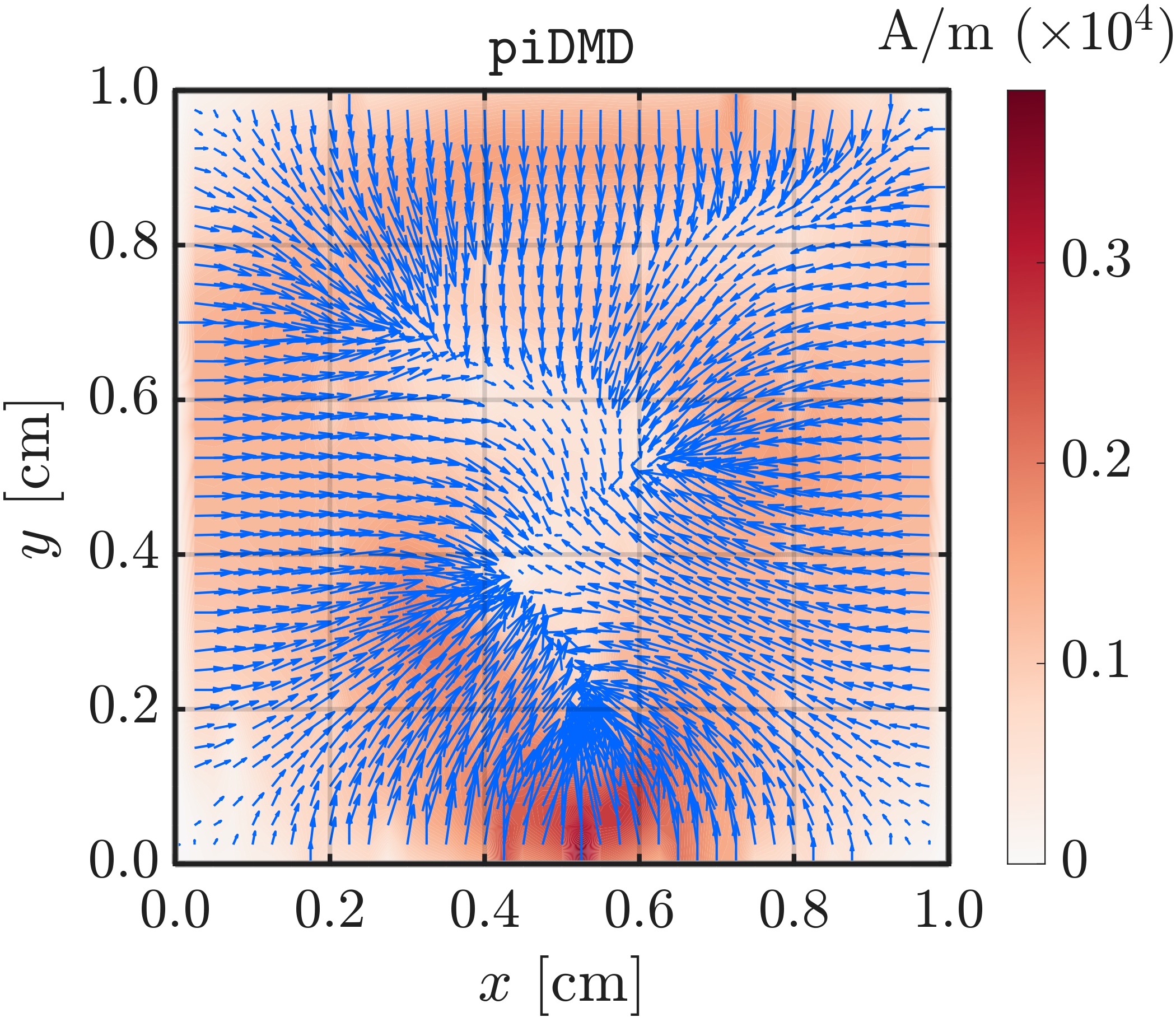}\hspace{0.02\textwidth}%
  \includegraphics[width=0.31\textwidth]{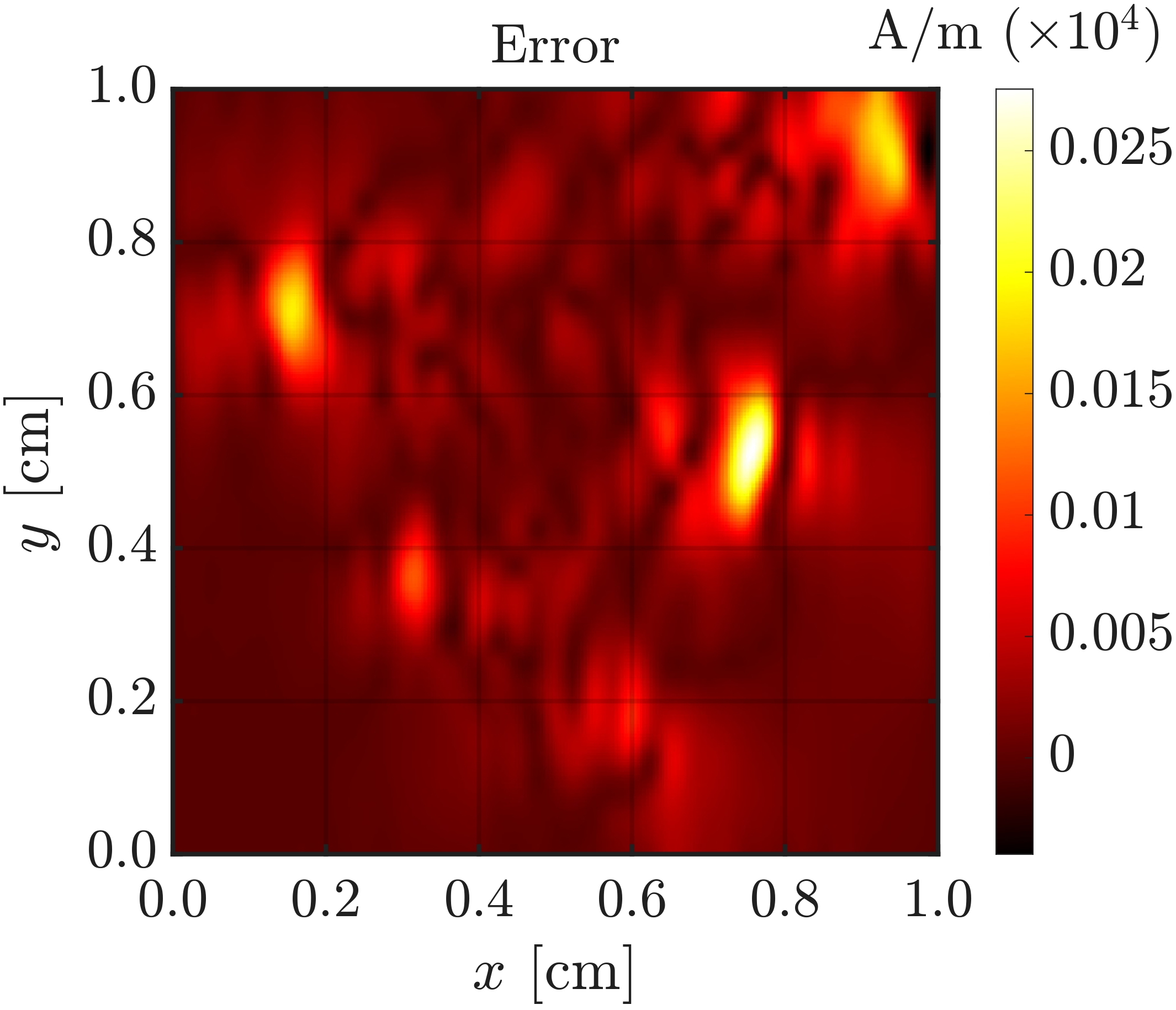}%
}

\caption{Electric field snapshots for the oscillating electron beam system at the final prediction time step for three different test values of $B_0$. Each row shows the ground truth, \texttt{piDMD} reconstruction, and pointwise absolute error.}
\label{fig:wavy_pidmd_ground_snaps}
\end{figure*}

Plasma systems are excellent candidates for data-driven modeling due to their high-dimensionality, and inherent nonlinearity arising from the complex wave-particle interactions. We consider a two-dimensional (2-D) electron beam simulated inside a square cavity of dimension $1~\mathrm{cm}\times 1~\mathrm{cm}$ using a charge-conserving electromagnetic particle-in-cell (EMPIC) algorithm \cite{na2016local}. The electron beam propagates along the $+y$ direction under the influence of a time-harmonic transverse magnetic flux along the $z$ direction. The solution domain is discretized using an irregular triangular mesh with $N_0=844$ nodes, $N_1=2447$ edges, and $N_2=1604$ elements (triangles). Plasma self-interactions are included by applying the EMPIC algorithm coupled to a finite element field solver to the governing Maxwell-Vlasov system of equations~\cite{na2016local}. In the PIC algorithm, the phase-space distribution of electrons is represented by superparticles (point charges) with charge $q_{\mathrm{sp}}=r_{\mathrm{sp}}q_e$ and mass $m_{\mathrm{sp}}=r_{\mathrm{sp}}m_e$, where $q_e$ and $m_e$ denote the electron charge and mass, respectively. Superparticles are injected from the bottom of the cavity with a uniform spatial distribution over $[0.45~\mathrm{cm},\,0.55~\mathrm{cm}]$ at an injection rate of $12$ superparticles per time-step. The injection velocity is set to $v_y=5\times 10^6~\mathrm{m/s}$. The external oscillating magnetic flux is given by $\mathbf{B}_{\mathrm{ext}}(t)=B_0 \sin\!\left({2\pi t}/T_{\mathrm{osc}}\right)\hat{\mathbf{z}}$, with $T_{\mathrm{osc}}=0.8~\mathrm{ns}$. The time-step for the EMPIC simulation is $0.2~\mathrm{ps}$, and we save the time snapshots every 40 time steps, resulting in the DMD sampling interval of $\Delta t=8~\mathrm{ps}$. Figure \ref{fig:wavy_particle_distrib} shows the superparticle distribution of the system at 20.8 ns, with $B_0=25~\text{mT}$, $r_{\mathrm{sp}}=2 \times 10^4$, and a superparticle injection rate of 12 per time step, with each superparticle colored according to its instantaneous velocity.

Here, we first consider the transverse sinusoidal magnetic-field amplitude $B_0$ as the parameter of interest, i.e., $\boldsymbol{\theta}=h_1(\boldsymbol{\theta})=B_0$. Keeping the superparticle ratio $r_{\mathrm{sp}}$ fixed at $2\times 10^4$, we vary $B_0$ from $10~\mathrm{mT}$ to $40~\mathrm{mT}$ in increments of $2.5~\mathrm{mT}$, while keeping all other simulation settings fixed. We normalize the $B_0$ values to be between 0 and 0.01. After each simulation, we discard the first 500 time-steps due the transient behavior. We choose $B_0=\{10,~25, ~40\}$ mT as our training parameters, and leave the rest for testing (i.e., we have 3 training and 10 test parameter samples). We choose $T=200$ snapshots from the post-transient range for each $B_0$ in the training set as our training data, and evaluate over 1000 time-steps for each test $B_0$. We choose $\tilde{r}=\hat{r}=40$. We benchmark this method against exact DMD, stacked parametric DMD and rKOI, each with DMD truncation rank 40 and identical training and prediction windows. The time-averaged residual error results are shown in Figure \ref{fig:wavy_boxplot_pidmd_vs_stacked}, and the ground truth vs prediction comparisons at the final prediction time step for \texttt{piDMD} at various test $B_0$ values are shown in Figure \ref{fig:wavy_pidmd_ground_snaps}.

\begin{figure}[!t]
    \centering
    \subfloat[]{%
        \includegraphics[width=0.45\textwidth]{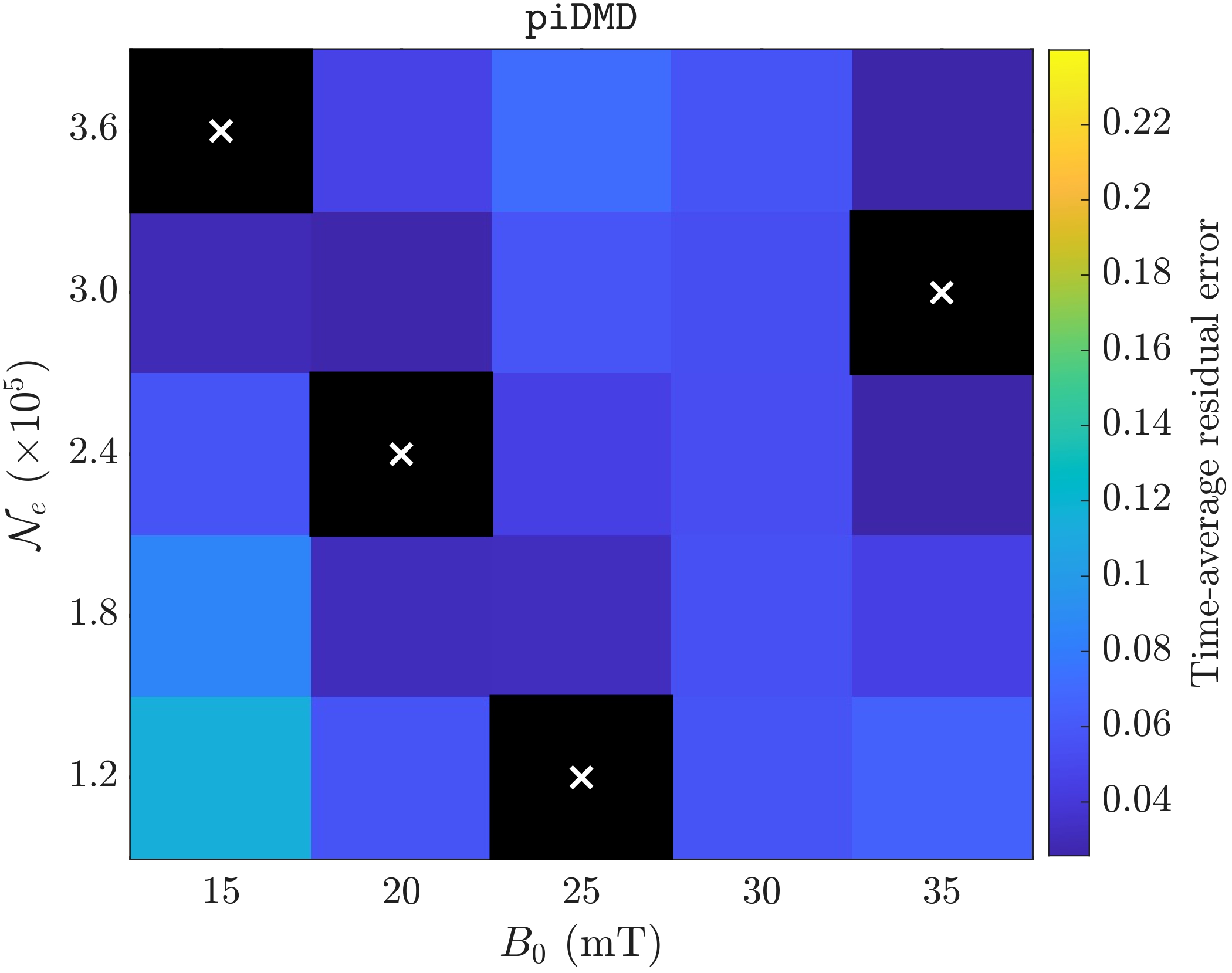}%
        \label{fig:wavy_pidmd_err_2param}%
    }\hspace{0.02\textwidth}
    \subfloat[]{%
        \includegraphics[width=0.45\textwidth]{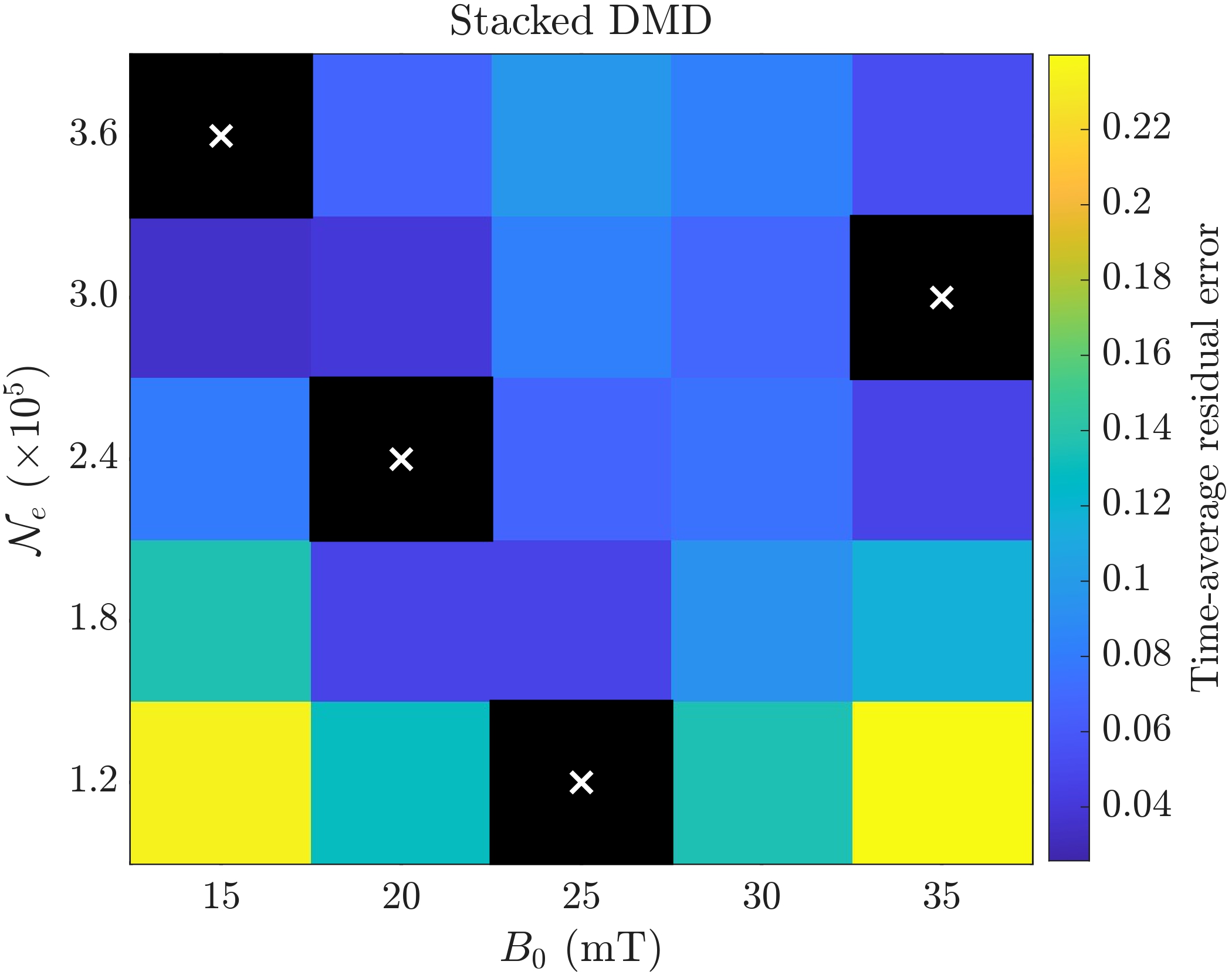}%
        \label{fig:wavy_stacked_err_2param}%
    }
    \caption{Time-averaged residual error distribution for the oscillating electron beam system over the parameter space. Each square corresponds to one two-dimensional parameter $\boldsymbol{\theta}=[B_0 \quad \mc{N}_e]^\top$, colored by the time-averaged residual error obtained over a $1000$ time step prediction horizon. Training $\boldsymbol{\theta}$ samples are marked in black with a white $\times$ symbol.}
    \label{fig:wavy_checkerboards_pidmd_vs_stacked}
\end{figure}

Next, we consider a more challenging two-dimensional parameter setting (i.e., we have two parameters of interest) where $\boldsymbol{\theta}=[B_0 \quad \mc{N}_e]^\top$, with $h_1(\boldsymbol{\theta})=B_0$, the external sinusoidal magnetic field amplitude, and $h_2(\boldsymbol{\theta})=\mc{N}_e$, the electron injection rate. We select $B_0$ values from the set $B_0 \in \{15,~20,~25,~30,~35\}$ mT. To vary $\mc{N}_e$, we vary the superparticle ratio $r_{\mathrm{sp}}$ by choosing values from the set $r_{\mathrm{sp}} \in \{1\times10^4,~1.5\times10^4,~2\times10^4,~2.5\times10^4,~3\times10^4\}$. Since we inject 12 superparticles per time step, this gives us varying $\mc{N}_e \in \{1.2\times10^5,~1.8\times10^5,~2.4\times10^5,~3.0\times10^5,~3.6\times10^5\}$ per time step. Therefore, we have 25 total combinations/pairs of $B_0$ and $\mc{N}_e$. Both $B_0$ and $\mc{N}_e$ samples are normalized to be between 0 and 0.01. To maintain persistency of excitation across both $B_0$ and $\mc{N}_e$, out of those 25 parameter samples we choose a small set of 4 values of $\boldsymbol{\theta}$ as our training samples which exhibit sufficient variation in both $B_0$ and $\mc{N}_e$, and leave the rest for testing (i.e., we have 4 training and 21 test parameter samples). As before, we discard the first 500 time steps due to transients, choose the next 200 time steps from each training sample (i.e., $T=200$), and predict over a 1000 time step horizon for each test parameter. We once again choose $\tilde{r}=\hat{r}=40$, and benchmark against stacked DMD and rKOI methods. However, rKOI fails to converge for several test parameters, and the time-averaged residual error blows up exponentially. This is likely due to the fact that the smooth variation assumption of the DMD operators with respect to the parameters breaks down when we consider more than one parameter. Hence, we have not included those results here. Figure \ref{fig:wavy_checkerboards_pidmd_vs_stacked} displays the time-averaged residual error for the test parameter samples across the two-dimensional parameter space for \texttt{piDMD} and stacked DMD.

\subsection{Virtual cathode oscillations}
\label{subsec:results_vircator}

\begin{figure}[!t]
    \centering
    \includegraphics[width=0.5\textwidth]{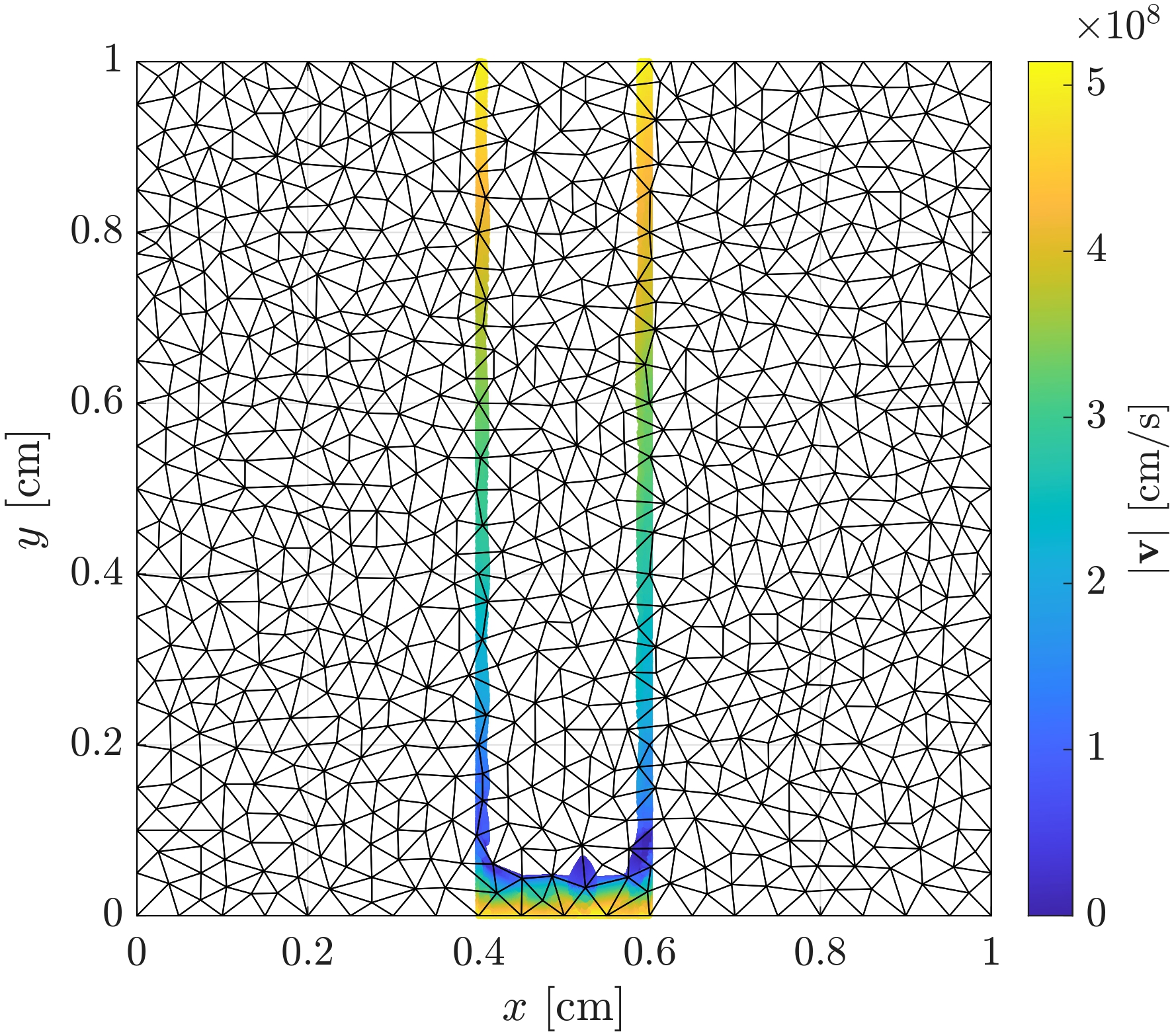}
    \caption{Virtual cathode snapshot at $33.6 ~\text{ns}$ with $r_{\mathrm{sp}}=1.5 \times 10^6$ and superparticle injection rate of 100 per time step. Each superparticle is colored based on its instantaneous velocity $| \mf{v} |$.}
    \label{fig:vircator_particle_distrib}
\end{figure}

\begin{figure}[!htbp]
    \centering
    \includegraphics[width=0.45\textwidth]{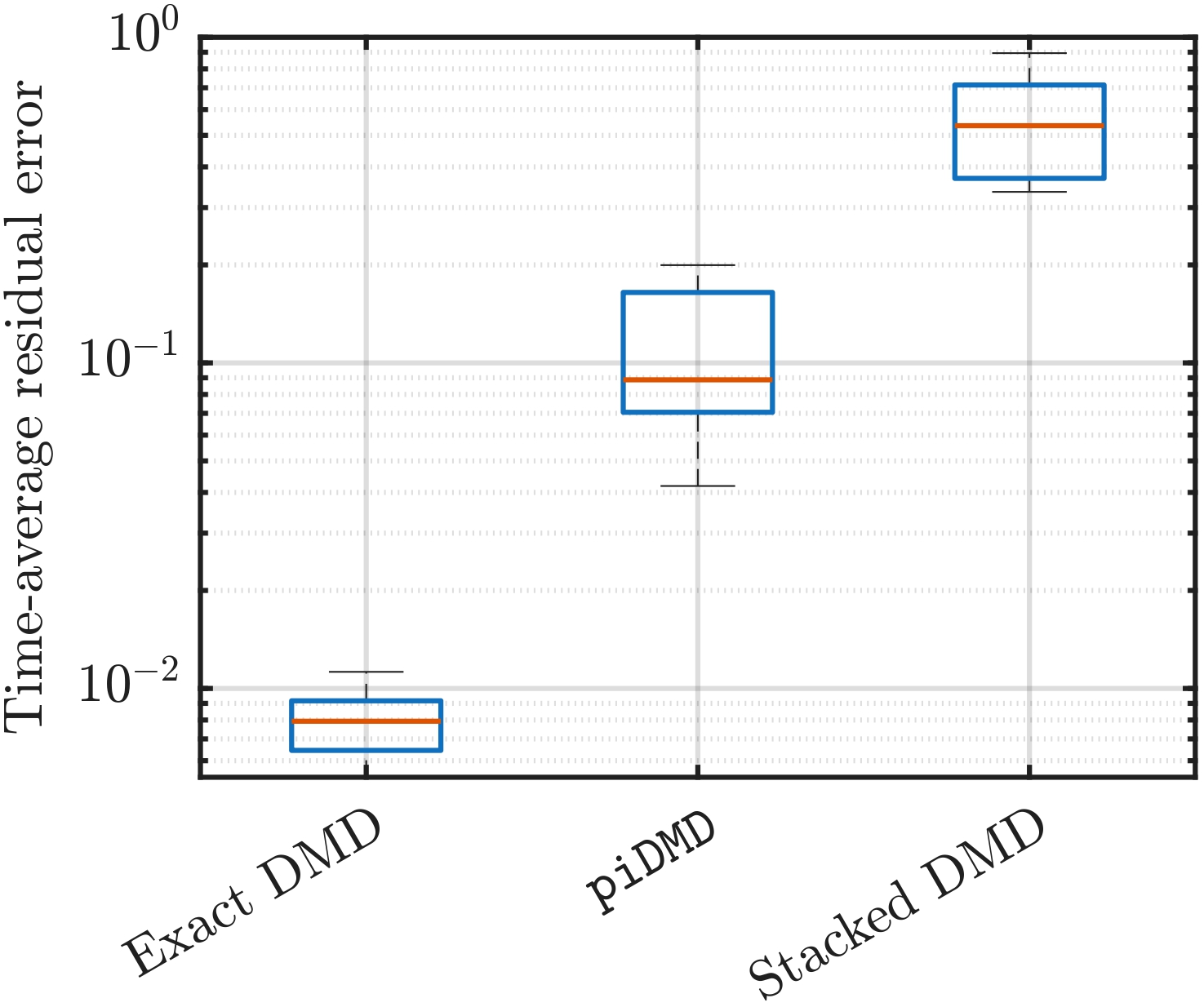}
    \caption{Time-averaged residual errors for exact DMD, \texttt{piDMD} and stacked parametric DMD for virtual cathode oscillations over the test $\mc{N}_e$ values.}    \label{fig:vircator_boxplot_pidmd_vs_stacked}
\end{figure}

\begin{figure}[!t]
\centering
\subfloat[$\boldsymbol{\theta}^*=\mc{N}_e=1.3 \times 10^8$]{%
  \includegraphics[width=0.31\textwidth]{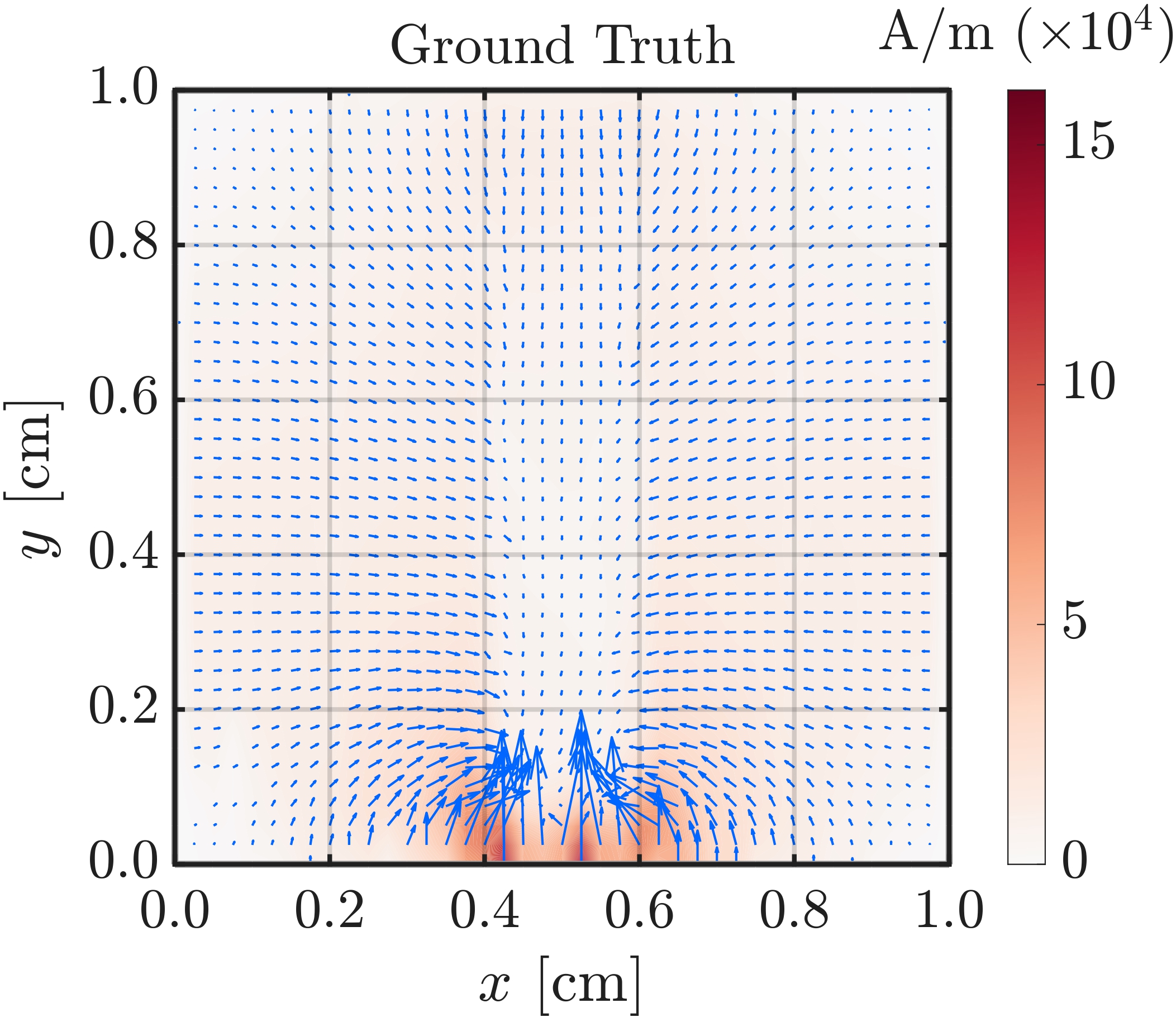}\hspace{0.02\textwidth}%
  \includegraphics[width=0.31\textwidth]{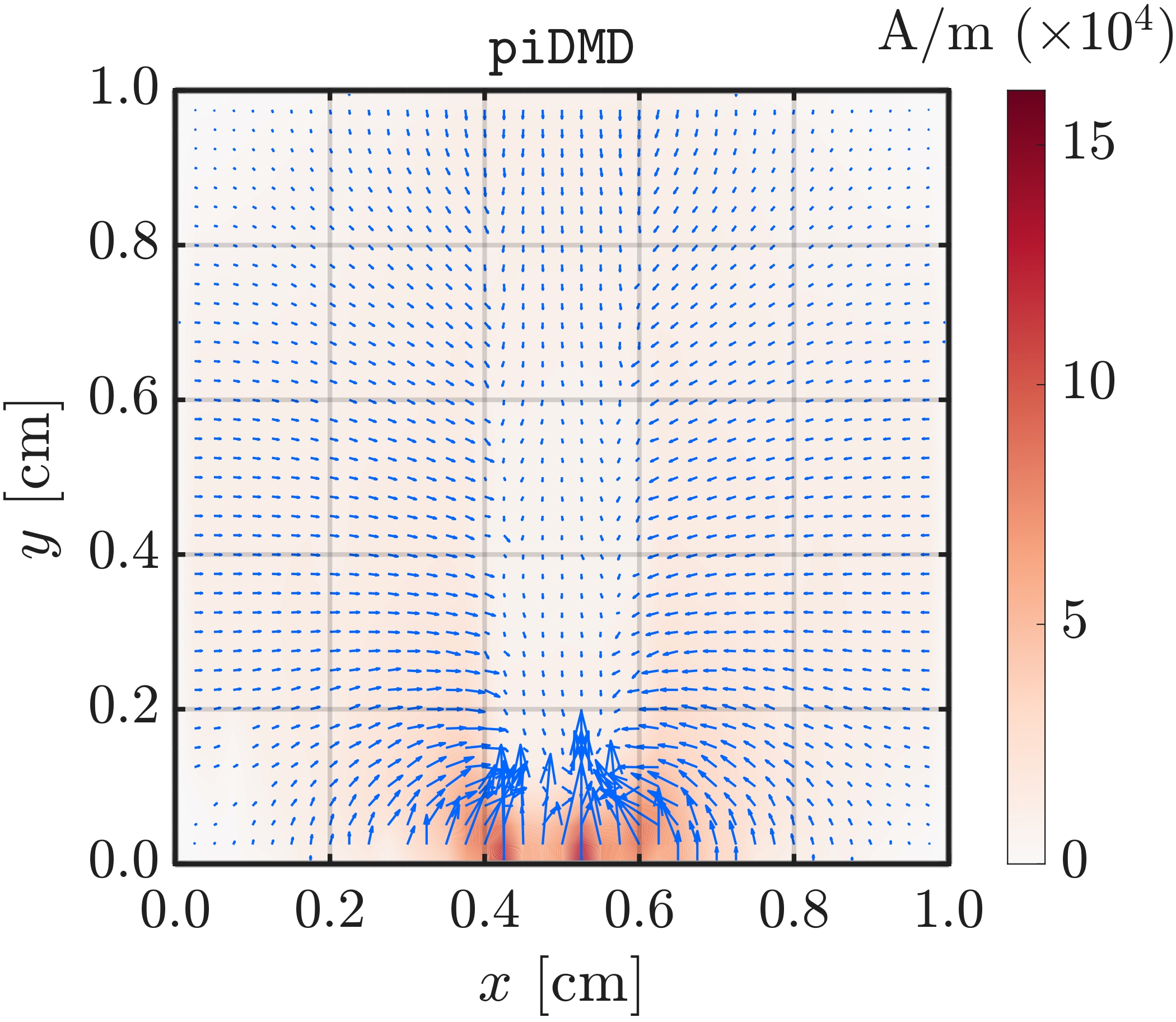}\hspace{0.02\textwidth}%
  \includegraphics[width=0.31\textwidth]{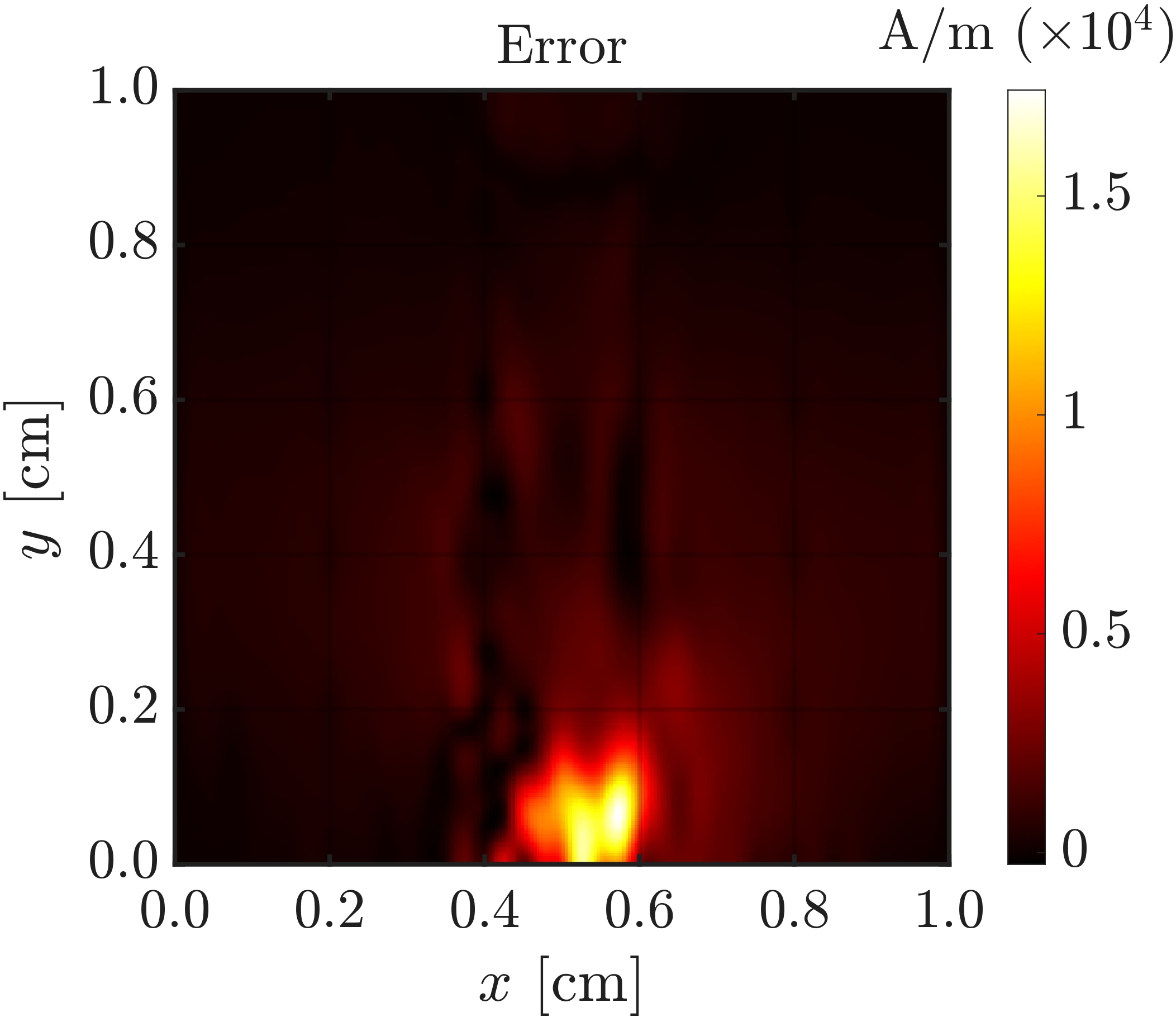}%
}

\vspace{0.8em}
\subfloat[$\boldsymbol{\theta}^*=\mc{N}_e=1.8 \times 10^8$]{%
  \includegraphics[width=0.31\textwidth]{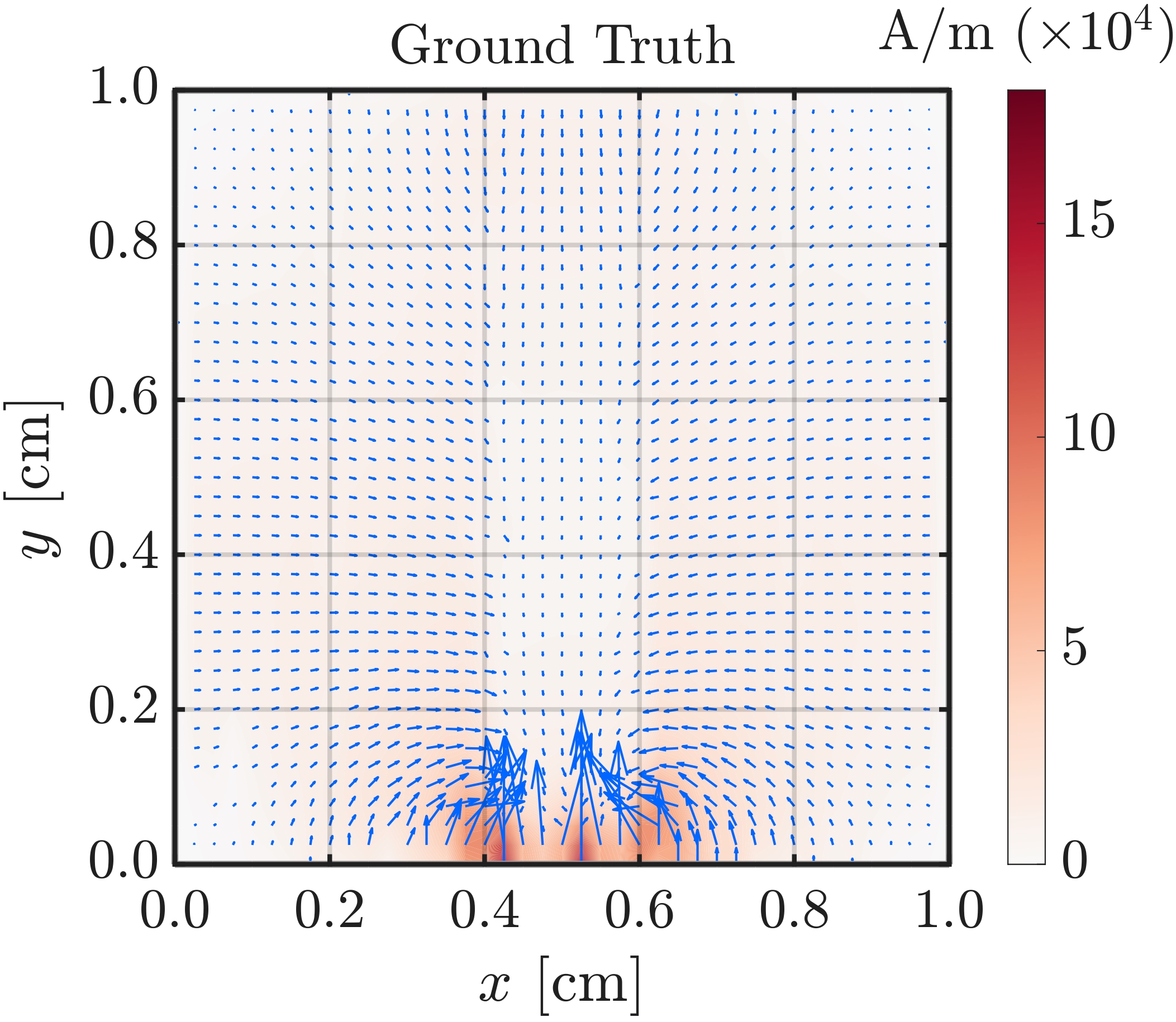}\hspace{0.02\textwidth}%
  \includegraphics[width=0.31\textwidth]{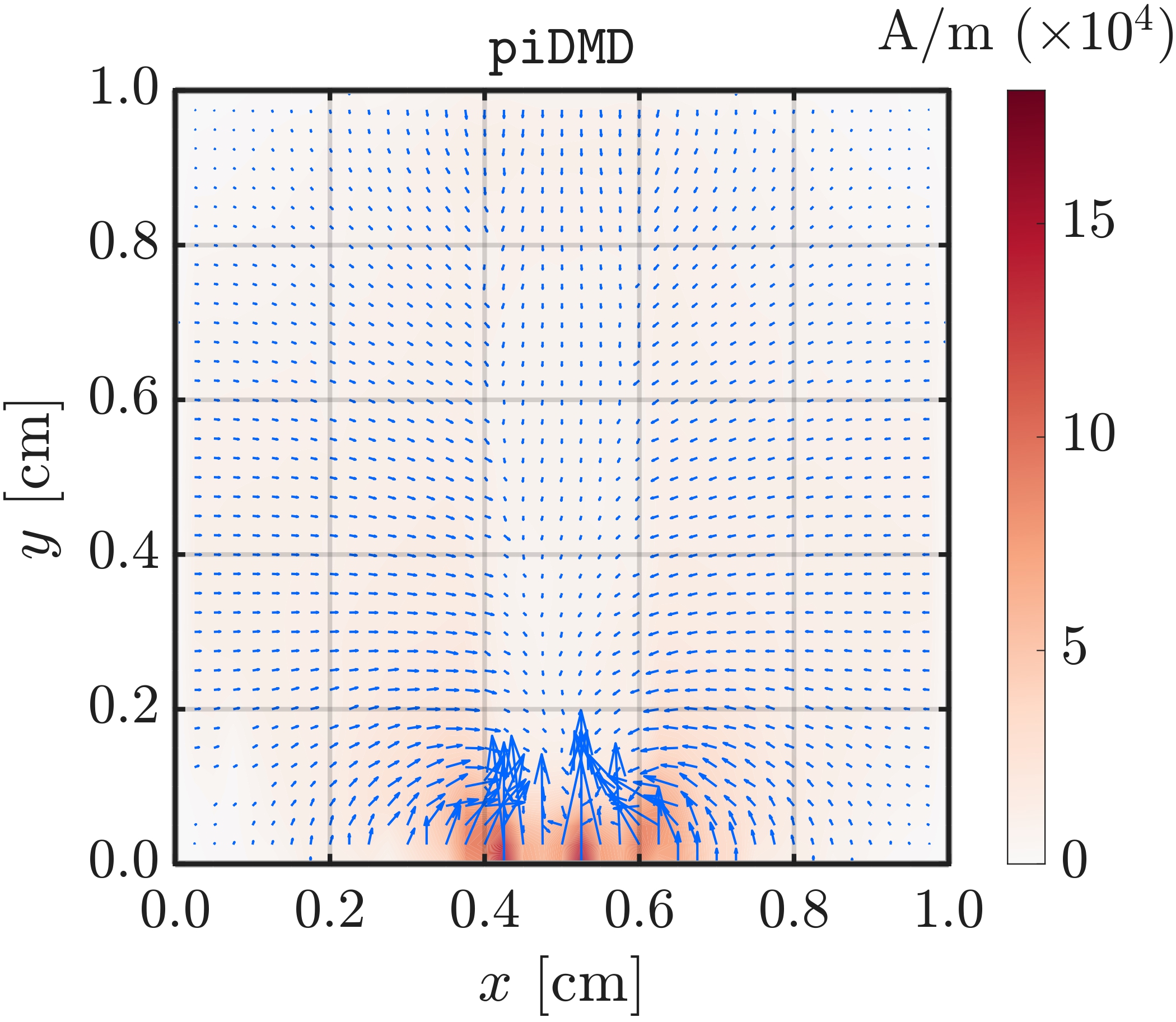}\hspace{0.02\textwidth}%
  \includegraphics[width=0.31\textwidth]{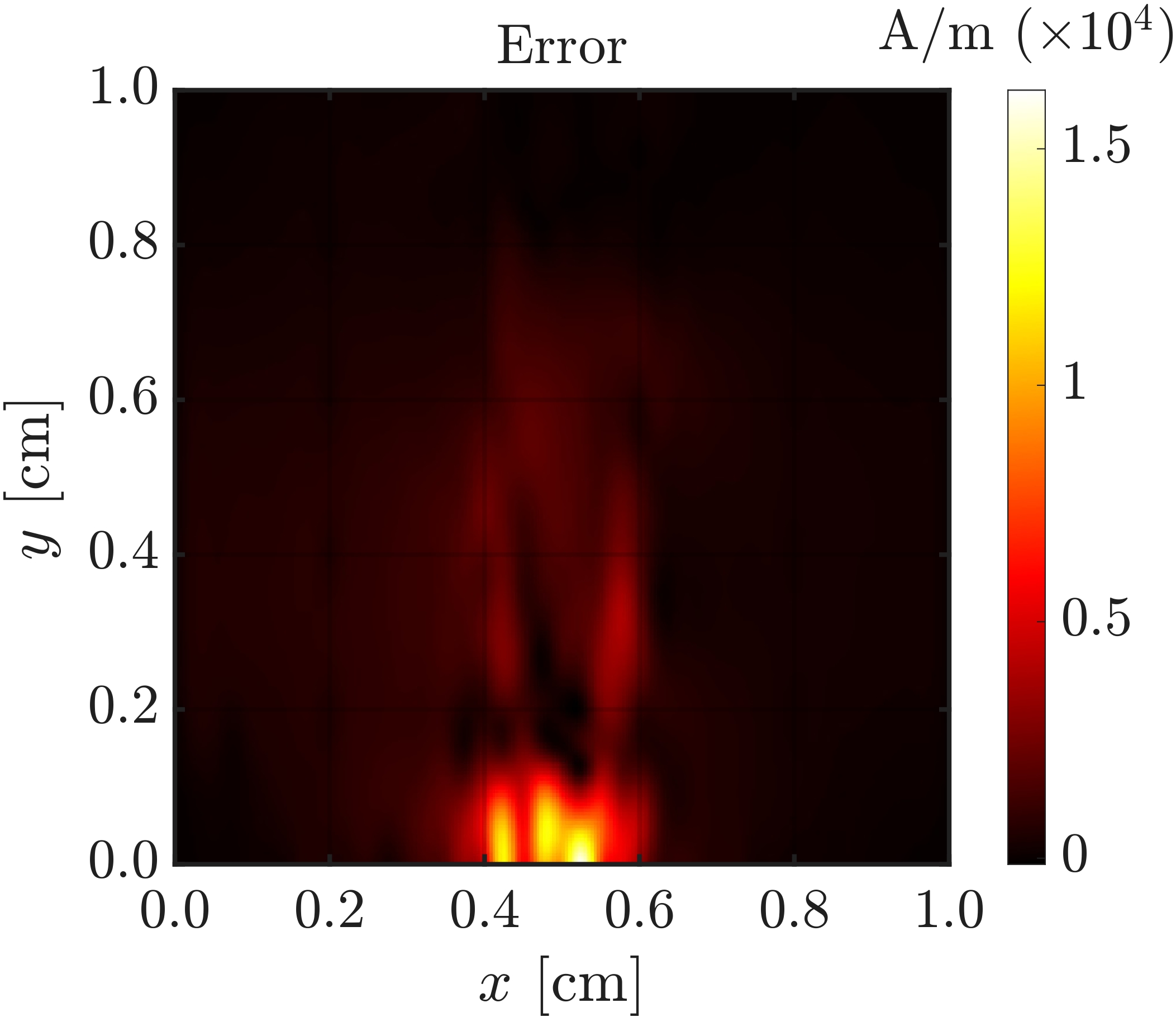}%
}

\caption{Electric field snapshots for virtual cathode oscillations at the final prediction time step for two different test values of $\mc{N}_e$. Each row shows the ground truth, \texttt{piDMD} reconstruction, and pointwise absolute error.}
\label{fig:vircator_pidmd_ground_snaps}
\end{figure}

We now consider a more complex plasma systems example of virtual cathode oscillations. 
A virtual cathode is obtained when the total current of the electron beam exceeds the space-charge limit, causing the formation of a potential barrier. The EMPIC setup of~\ref{subsec:results_electron_beam} is adopted with the following modifications. We increase the current injection by increasing the superparticle injection rate to $100$ per time step. The superparticles are injected at the bottom in the region [0.4 cm, 0.6 cm]. Instead of a transverse oscillating magnetic flux, we apply a strong confining magnetic field $\mf{B}=B_y \hat{\mf y}$ along the $y$-direction, with $B_y = 20 ~\text{T}$. We choose $\boldsymbol{\theta}=h_1(\boldsymbol{\theta})=\mc{N}_e$ as our parameter of interest, and vary it by changing the superparticle ratio $r_{\mathrm{sp}}$ from $1 \times 10^6$ to $2 \times 10^6$ in increments of $1 \times 10^5$. Figure \ref{fig:vircator_particle_distrib} shows the superparticle distribution for the virtual cathode at time 33.6 ns, for $r_{\mathrm{sp}}=1.5 \times 10^6$ and superparticle injection rate of 100 per time step, with each superparticle colored according to its instantaneous velocity. Since superparticles are injected at the rate of $100$ per time step, this results in $\mc{N}_e \in \{1 \times 10^8, ~1.1 \times 10^8, ~1.2 \times 10^8, ~\dots, ~2 \times 10^8\}$ per time step. We normalize the $\mc{N}_e$ values to be between 0 to 0.01. We choose $\mc{N}_e=\{1 \times 10^8, ~1.5 \times 10^8, ~2 \times 10^8\}$, as our training parameters and the rest for testing (i.e., we have 3 training and 8 test parameter samples). We discard the first $1500$ time steps due to transient behavior, choose the next $1000$ time steps from each training parameter (i.e., $T=1000$), and predict over a horizon of $2000$ time steps for each test parameter. We select $\tilde r= \hat r =60$, and benchmark against stacked DMD and rKOI methods. However, rKOI fails to converge for several of the test parameters, and the time-averaged residual error blows up exponentially, highlighting its sensitivity to choice of training parameters, in addition to the dimension of parameter space. The time-averaged residual error results are shown in Figure \ref{fig:vircator_boxplot_pidmd_vs_stacked}, and the ground truth vs prediction comparisons at the final prediction time step for \texttt{piDMD} at various test $\mc{N}_e$ values are shown in Figure \ref{fig:vircator_pidmd_ground_snaps}.

\section{Conclusion}
\label{sec:conclusion}
This paper introduced parameter-interpolated dynamic mode decomposition (\texttt{piDMD}), a parametric DMD framework that learns a parameter-affine Koopman surrogate model of complex nonlinear systems from data. This method embeds known parameter functions directly into the regression step, instead of relying on interpolating modes, eigenvalues, or reduced operators. Across high-dimensional benchmarks in nonlinear dynamics such as flow past cylinder and kinetic plasma systems such as oscillating electron beams and virtual cathode oscillations, \texttt{piDMD} demonstrated consistently strong predictive performance and improved robustness in settings with limited training parameter samples, where other interpolation-based approaches can fail to converge. Despite its advantages, \texttt{piDMD} has its limitations. The assumption that the parameter-affine functions are known \emph{a priori} is somewhat restrictive. Future work will focus on extending the \texttt{piDMD} framework using neural networks to automate the selection of Koopman observables and the learning of parameter-affine functions.

\section*{Acknowledgment}
\label{sec:acknowledgment}
This work was funded in part by the Department of Energy Grant DE-SC0022982 through the NSF/DOE Partnership in Basic Plasma Science and Engineering, by the 
OSU-COE Strategic Research Research Initiative Grant GR138806,
and by the Ohio Supercomputer Center Grants PAS-0061 and PAS-2709.


\bibliographystyle{elsarticle-num}

\biboptions{sort&compress}

\bibliography{bilinear_param_DMD}

\end{document}